\newcommand{\eM}     {$\epsilon$\protect\nobreakdash-machine}
\newcommand{\eMs}    {$\epsilon$\protect\nobreakdash-machines}
\newcommand{\EMs}    {$\epsilon$\protect\nobreakdash-Machines}
\newcommand{\Process}{\mathcal{P}}
\newcommand{\MeasAlphabet}	{\mathcal{A}}
\newcommand{\MeasSymbol}   { {X} }
\newcommand{\meassymbol}   { {x} }
\newcommand{\past}	{ {\overleftarrow {\meassymbol}} }
\newcommand{\Future}	{ \overrightarrow{\MeasSymbol} }
\newcommand{\future}	{ \overrightarrow{\meassymbol} }
\newcommand{\CausalState}	{ \mathcal{S} }
\newcommand{\causalstate}	{ \sigma }
\newcommand{\CausalStateSet}	{ \boldsymbol{\CausalState} }
\newcommand{\Prob}      {\Pr} 
\newcommand{\hmu}		{h_\mu}
\theoremstyle{plain}   
\theoremstyle{plain}   \newtheorem{Lem}{Lemma}
\theoremstyle{plain} 	
\theoremstyle{plain} 	\newtheorem{The}{Theorem}
\theoremstyle{plain} 	\newtheorem{Prop}{Proposition}
\theoremstyle{plain} 	
\theoremstyle{plain}	\newtheorem*{Rem}{Remark}
\theoremstyle{plain}	\newtheorem{Def}{Definition} 
\theoremstyle{plain}	
\def\norm#1{\|#1\|}
\def\implies{\Longrightarrow}
\def\N{\mathbb{N}} 
\def\R{\mathbb{R}}
\def\Ex{\mathbb{E}}
\def\vY{\overrightarrow{Y}}
\def\vz{\overrightarrow{z}}
\def\epsilonb{\overline{\epsilon}}
\def\th{\tilde{h}}
\def\Kt{\tilde{K}}
\def\sb{\overline{s}}
\def\Yb{\overline{Y}}
\def\futurestates{\overrightarrow{s}}
\def\causalstateb{\overline{\sigma}}
\def\AvgUncertainty{\mathcal{U}}
\def\V{\mathcal{V}}
\def\goesonx{\stackrel{x}{\rightarrow}}
\def\goesonw{\stackrel{w}{\rightarrow}} 
\def\B{\mathcal{B}}
\def\CS{\mathcal{S}}
\def\CSb{\overline{\mathcal{S}}}
\def\csr{s}
\def\cs{\sigma}
\def\MS{X}
\def\ms{x}
\newcommand{\SYN}{ \mathrm{SYN} }
\newcommand{\WSYN}{ \mathrm{WSYN} }
\def\L{\mathcal{L}}
\def\LM{\mathcal{L}(M)}
\def\LLM{\mathcal{L}_L(M)}
\def\LiM{\mathcal{L}_{\infty}(M)}
\def\AeL{A_{\epsilon,L}}
\def\Aen{A_{\epsilon,n}}
\def\AeLi{A_{\epsilon,L_i}}
\def\Ae{A_{\epsilon}}
\def\sumx{\sum_{x \in \MeasAlphabet}}
\def\sumw{\sum_{w \in \LLM}}
\begin{document}

\title{Asymptotic Synchronization for Finite-State Sources}

\author{Nicholas F. Travers}
\email{ntravers@math.ucdavis.edu}
\affiliation{Complexity Sciences Center}
\affiliation{Mathematics Department}

\author{James P. Crutchfield}
\email{chaos@cse.ucdavis.edu}
\affiliation{Complexity Sciences Center}
\affiliation{Mathematics Department}
\affiliation{Physics Department\\
University of California at Davis,\\
One Shields Avenue, Davis, CA 95616}
\affiliation{Santa Fe Institute\\
1399 Hyde Park Road, Santa Fe, NM 87501}

\date{\today}

\bibliographystyle{unsrt}

\begin{abstract}

We extend a recent synchronization analysis of exact finite-state sources to
nonexact sources for which synchronization occurs only asymptotically. Although
the proof methods are quite different, the primary results remain
the same. We find that an observer's average uncertainty in the source state
vanishes exponentially fast and, as a consequence, an observer's average
uncertainty in predicting future output converges exponentially fast to the
source entropy rate. 

\end{abstract}

\pacs{
02.50.-r  
89.70.+c  
05.45.Tp  
02.50.Ey  
}
\preprint{Santa Fe Institute Working Paper 10-11-029}
\preprint{arxiv.org:arXiv:1011.1581 [nlin.CD]}

\maketitle


\section{Introduction}
\vspace{-.1in}

In Ref. \cite{Trav10a} we analyzed the synchronization process for exact 
\eMs, where the observer may come to know the internal state of the machine
with certainty after only a finite number of measurements. Here, we examine the 
case of nonexact \eMs, where the observer may only synchronize to the machine's
state asymptotically. Although the analysis differs, the behavior is
qualitatively similar to the exact case in the sense that an observer (on
average) synchronizes to a nonexact machine exponentially fast. That is, there
exist constants $K > 0$ and $0 < \alpha < 1$ such that the average state
entropy $\AvgUncertainty(L) \leq K \alpha^L$, for all $L \in \N$.

Our development is organized as follows.
Section \ref{sec:Defns} briefly reviews the synchronization problem and
provides the essential definitions for our results.
Section \ref{sec:Picture} presents an intuitive picture of the synchronization
process, using it to derive a formula for $\phi(w)$, the
conditional state distribution induced by a word $w$.
Section \ref{sec:EntropyRate} establishes a formula for the entropy rate of a
finite-state \eM.
Section \ref{sec:AvAsymSync} uses the entropy-rate formula to prove
the existence of averaged asymptotic synchronization.
Section \ref{sec:SyncRateThm} builds on this result to prove
our main theorem---the Nonexact Machine Synchronization Theorem.
Section \ref{sec:EntropyConv} uses this theorem  to show that, for any nonexact \eM,
the state entropy $\AvgUncertainty(L)$ vanishes exponentially fast and the
length-$L$ entropy-rate approximation $\hmu(L)$ converges exponentially fast
to the machine's entropy rate. Finally, Sec. \ref{sec:Concl} summarizes
our results and examines possible extensions.

\vspace{-.1in}
\section{Background}
\label{sec:Defns}
\vspace{-.1in}

This section lays out the necessary definitions and background for our results.
For a more thorough introduction the reader is referred to Ref. \cite{Trav10a},
where a similar but more detailed presentation is given.

\vspace{-.2in}
\subsection{Machines}
\vspace{-.1in}

\begin{Def}
\emph{Hidden Markov machine}:
\label{Def:HMM}
A \emph{finite-state edge-label hidden Markov machine (HMM)} consists of
\begin{enumerate}
\setlength{\topsep}{0mm}
\setlength{\itemsep}{0mm}
\item a finite set of states
	$\CausalStateSet = \{\causalstate_1, ... , \causalstate_N \}$,
\item a finite alphabet of symbols $\MeasAlphabet$, and
\item a set of $N$ by $N$ symbol-labeled transition matrices $T^{(\ms)}$,
	$\ms \in \MeasAlphabet$,
	where $T^{(\ms)}_{ij}$ is the probability of transitioning from
	state $\causalstate_i$ to state $\causalstate_j$ on symbol $\ms$.
	The corresponding internal state-to-state transition matrix is denoted 
	$T = \sum_{\ms \in \MeasAlphabet} T^{(\ms)}$.
\end{enumerate}
A hidden Markov machine can be depicted as a directed graph with labeled edges.
The nodes are the states $\{\causalstate_1, ... , \causalstate_N \}$ and for
all $\ms,i,j$ with $T^{(\ms)}_{ij} > 0$, there is an edge from state
$\causalstate_i$ to state $\causalstate_j$ labeled $p|\ms$ for the symbol
$\ms$ and transition probability $p = T^{(\ms)}_{ij}$. We require that the
transition matrices $T^{(\ms)}$ be such that this graph is strongly connected.
\end{Def}

A hidden Markov machine $M$ generates a stationary process
$\Process = (\MS_L)_{L \geq 0}$ as follows. Initially, $M$ starts in some state
$\causalstate_{i^*}$ chosen according to the stationary distribution $\pi$ over
machine states---the distribution satisfying $\pi T = \pi$.  It then picks an
outgoing edge according to their relative transition probabilities
$T^{(\ms)}_{i^*j}$, emits the symbol $\ms^*$ labeling this
edge, and follows the edge to a new state $\causalstate_{j^*}$. The next
output symbol and state are consequently chosen in a similar fashion, and this
procedure is repeated indefinitely. 

We denote by $\CS_0, \CS_1, \CS_2, \ldots$ the random variables (RVs) for the
sequence of machine states visited and by $\MS_0, \MS_1, \MS_2, \ldots$ the RVs
for the associated sequence of output symbols generated. The sequence of states
$(\CS_L)_{L \geq 0}$ is a Markov chain with transition kernel $T$. However,
the stochastic process we
consider is not the sequence of states, but rather the associated sequence of
outputs $(\MS_L)_{L \geq 0}$, which is not normally Markov. We assume that
an observer of the process sees the sequence of outputs, but does not have
direct access to the machine's ``hidden'' internal states.

\paragraph*{Example: Even Process Machine}

Figure~\ref{fig:EvenProcess} gives an HMM for the Even Process, a machine that
has been studied extensively \cite{EvenProcessMerge}. Its name derives
from the feature that in its output there are always an even number of $1$s
between consecutive $0$s. The transition matrices are:
\begin{align*}
T^{(0)} & =
	\left(
	\begin{array}{cc}
		p & 0 \\
		0 & 0 \\
	\end{array}
	\right) ~,
	\nonumber \\
T^{(1)} & =
	\left(
	\begin{array}{cc}
	0 & 1-p \\
	1 & 0 \\
	\end{array}
	\right) ~.
\end{align*}

\begin{figure}[h]
\centering
\includegraphics[scale=0.6]{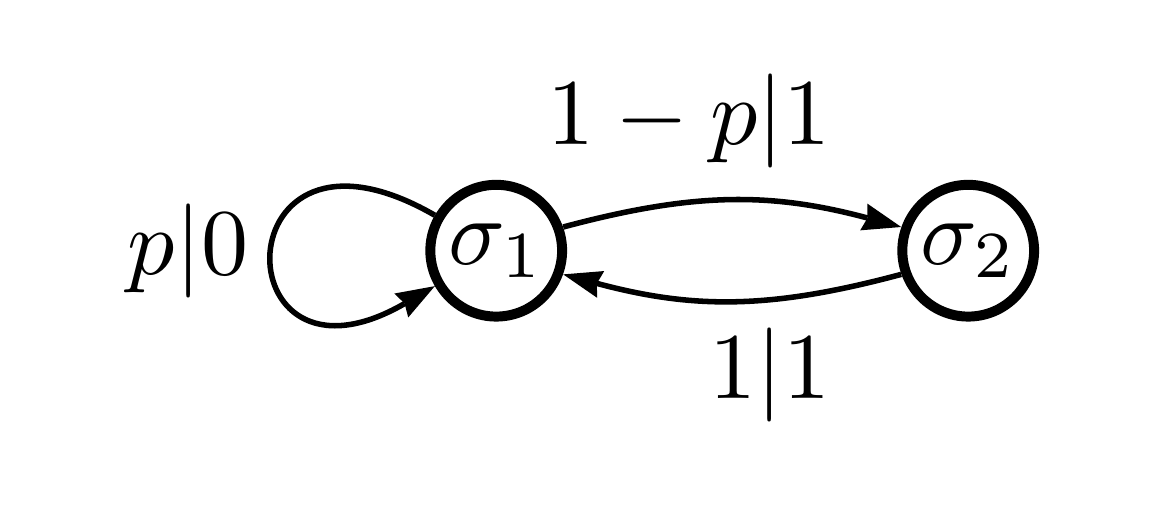}
\caption{A hidden Markov machine (the \eM) for the Even Process.  The
  transitions denote the probability $p$ of generating symbol $\ms$ as $p|\ms$.
  }
\label{fig:EvenProcess}
\end{figure}

The following notation will be used for sequences of output RVs:
\begin{enumerate}
\setlength{\topsep}{0mm}
\setlength{\itemsep}{0mm}
\item $\Future = \MS_0 \MS_1 \ldots$ ,
\item $\Future^L = \MS_0 \MS_1 \ldots \MS_{L-1} $ , \mbox{ and }
\item $\Future_t^L = \MS_t \MS_{t+1} \ldots \MS_{t+L-1}$ .
\end{enumerate}

\begin{Def}
A \emph{finite-state \eM} is a finite-state edge-label hidden Markov machine
with the following properties:
\begin{enumerate}
\setlength{\topsep}{0mm}
\setlength{\itemsep}{0mm}
\item \emph{Unifilarity}: For each state $\causalstate_k \in \CausalStateSet$
	and each symbol $\ms \in \MeasAlphabet$ there is at most one
	outgoing edge from state $\causalstate_k$ labeled with symbol $\ms$.
\item \emph{Probabilistically distinct states}: For each pair of distinct states
	$\causalstate_k, \causalstate_j \in \CausalStateSet$ there exists some
	finite word $w = \ms_0 \ms_1 \ldots \ms_{L-1}$ such that:
	\begin{equation*}
	\Prob(\Future^L = w|\CS_0 = \causalstate_k)
		\not= \Prob(\Future^L = w|\CS_0 = \causalstate_j) ~.
	\end{equation*}
\end{enumerate} 
\end{Def}

\paragraph*{Example (continued)}
The Even Process machine given above is also an \eM. It is clearly unifilar,
and $\causalstate_1$ can generate the symbol $0$ whereas $\causalstate_2$
cannot, so the states are probabilistically distinct.

\begin{Rem}
\EMs\ were originally defined in Ref. \cite{Crut88a} as hidden Markov
machines whose states, known as \emph{causal states}, were the equivalence
classes of infinite pasts $\past$ with the same probability distribution over
futures $\future$. This ``history machine'' definition is, in fact, equivalent
to the ``generating machine'' definition presented above in the finite-state
case. Although, this is not immediately apparent. Formally, it follows
from the synchronization results established here and in Ref. \cite{Trav10a}.
\end{Rem}

We now provide the definitions for two extensions of an \eM\ $M$ that are
necessary for our proofs later on: the edge machine $M_{edge}$ and the power
machine $M^n$. In what follows:
\begin{enumerate}
\setlength{\topsep}{0mm}
\setlength{\itemsep}{0mm}
\item $\Prob(\ms|\causalstate_k) \equiv \Prob(\MS_0 = \ms|\CS_0 = \causalstate_k)$,
\item $\Prob(w|\causalstate_k) \equiv \Prob(\Future^{|w|} = w|\CS_0 = \causalstate_k)$,
\item $I(\ms,k,j)$ denotes the indicator function of the transition from state
$\causalstate_k$ to state $\causalstate_j$ on symbol $\ms$, and
\item $I(w,k,j)$ denotes the indicator function of the transition from state
$\causalstate_k$ to state $\causalstate_j$ on the word $w$.
\end{enumerate}
That is, $I(\ms,k,j) = 1$ if $\causalstate_k \goesonx \causalstate_j$ and $0$
otherwise; $I(w,k,j) = 1$ if  $\causalstate_k \goesonw \causalstate_j$ and $0$
otherwise.

\begin{Def}
For an \eM\ $M$, the corresponding \emph{edge machine} $M_{edge}$ is the Markov
chain whose states are the outgoing edges of $M$. That is, the states are the
pairs $(x,\causalstate_k)$ such that $\Prob(\ms|\causalstate_k) > 0$, and the
transition probabilities are defined as:
\begin{align*}
\Prob((\ms,\causalstate_k) \rightarrow (y, \causalstate_j))
	= \Prob(y|\causalstate_j) I(\ms,k,j) ~.
\end{align*}
\end{Def}

A sequence of $M_{edge}$ states visited by the Markov chain corresponds to a
sequence of edges visited by the original machine $M$. The process
$\Process_{edge}$ generated by $M_{edge}$ can be thought of as the
\emph{bi-process} $(\MS_L, \CS_L)_{L \geq 0}$ generated by the original machine 
$M$ as it moves from state to state generating symbols. Note that since $M$'s
graph is strongly connected, $M_{edge}$'s graph is as well. Hence, the
edge-label Markov chain is irreducible and has a unique stationary distribution
$\pi_{edge}$. See Fig. \ref{fig:EdgePowerMachines}(top).

\begin{Def}
Let $M$ be an \eM, and let $n$ be relatively prime to the period $p$ of $M$'s
graph. The \emph{power machine} $M^n$ is defined to be the \eM\ with the states
of $M$, output symbols which are length-$n$ words generated by $M$, and
transition probabilities given by:
\begin{align*}
\Prob(\causalstate_k \goesonw \causalstate_j)
	= \Prob(w|\causalstate_k) I(w,k,j) ~.
\end{align*}
The power machine $M^n$ generates the same process as the original machine $M$,
but over length-$n$ blocks.
\end{Def}

Note that since $M$ is by definition unifilar with probabilistically distinct
states, $M^n$ is also necessarily unifilar with probabilistically distinct
states. Furthermore, it can be shown that for $n$ relatively prime to
$p = \mathrm{period}(M)$ the graph of $M^n$ is strongly connected. Therefore, for
$n$ relatively prime to $p$, $M^n$ is indeed an \eM\ for the process
$\Process^n$. See Fig. \ref{fig:EdgePowerMachines}(bottom).

\begin{figure}[h]
\centering
\includegraphics[scale=0.5]{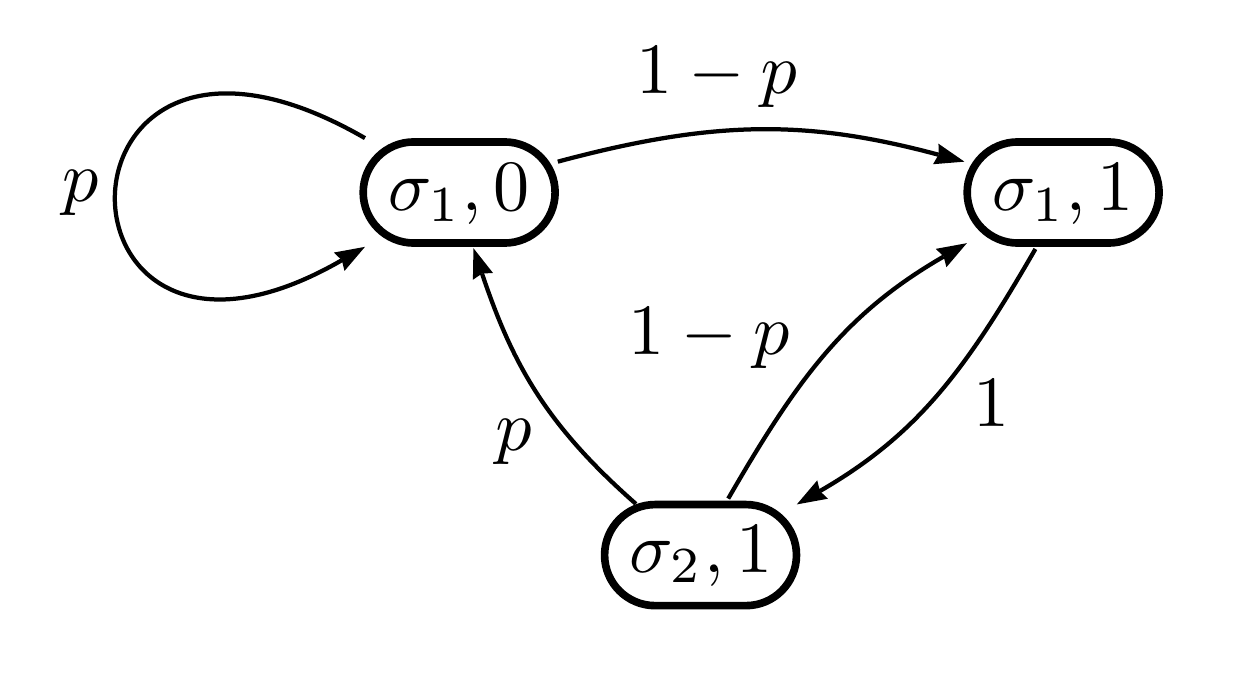} 
\includegraphics[scale=0.5]{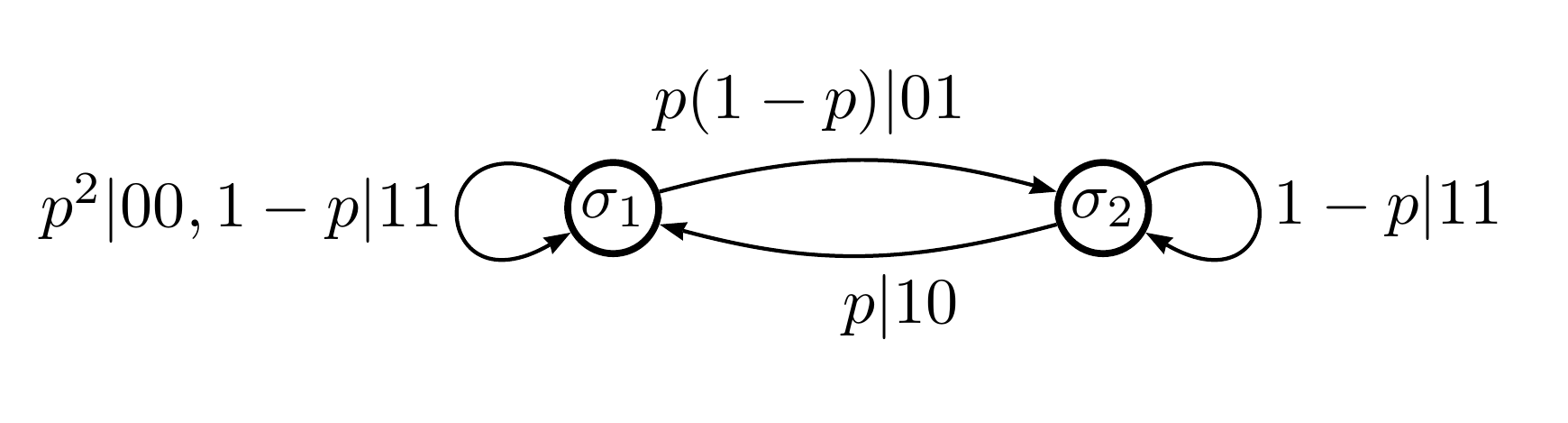}
\caption{Examples of $M_{edge}$ (top) and $M^2$ (bottom) for the Even Process
  \eM\ $M$.
  }
\label{fig:EdgePowerMachines}
\end{figure}

\begin{Def}
For an \eM\ $M$ the \emph{minimum distinguishing length} $L^*$ is the shortest
length $L$ such that the probability distributions over futures $\Future^L$ of
length $L$ are distinct for each pair of distinct states $\causalstate_k$ and
$\causalstate_j$:
\begin{align*}
L^* \equiv \min \{ & L: \Prob(\Future^L|\CS_0 = \causalstate_k)
  \not= \Prob(\Future^L|\CS_0 = \causalstate_j) ~, \\ 
  & \mbox{for all } k \not= j \} ~.
\end{align*} 
If a machine $M$ has a minimum distinguishing length $L^*$, we also say
that $M$ has \emph{length-$L^*$ future distinguishable states}. 
\end{Def}

Note that $L^*$ must be finite for any \eM, since \eMs\ have probabilistically
distinct states, and that, for $n \geq L^*$ and relatively prime to
$p = \mathrm{period}(M)$, $M^n$ is an \eM\ with a minimum distinguishing length
of 1.

\vspace{-0.2in}
\subsection{Synchronization} 
\label{sec:Sync}
\vspace{-0.1in}

Although we assume that our observer is not able to directly see the \eM's
internal state ($\CS_L$), it is able to see the output symbols generated by
the machine (the $\MS_L$'s). Thus, the observer may attempt to infer the
internal machine state through observations of the output. We are
interested in studying the procedure by which the observer synchronizes to the
machine's state through these observations. Due to unifilarity, we
know that if an observer is able to completely synchronize to the machine's
internal state at some time $T > 0$, it remains synchronized for all future
times $T^\prime \geq T$. For simplicity, we assume that the initial state is
chosen according to the stationary distribution $\pi$, so that the process generated
by the machine is stationary, and also that the observer has knowledge of this fact. 

For a word $w$ of length $L$ generated by the machine let
$\phi(w) \equiv \Prob(\CausalStateSet|w)$ be the observer's
\emph{belief distribution} as to the current state of the machine after
observing $w$. That is, 
\begin{align*}
\phi(w)_k	& = \Prob(\CS_L = \causalstate_k | \Future^L=w) \\
		& \equiv \Prob(\CS_L = \causalstate_k | \Future^L=w, \CS_0 \sim \pi) ~.
\end{align*}
And, define the observer's uncertainty in the machine state after observing
$w$ as:
\begin{align*}
u(w) & = H[\phi(w)] \\
  & = H[\CS_L|\Future^L = w] ~.
\end{align*} 

Let $\LM$ denote the set of all finite words that $M$ can generate, $\LLM$
the set of all length-$L$ words it can generate, and $\LiM$ the set of 
all infinite sequences $\future = \ms_0 \ms_1 ...$ that it can generate.

\begin{Def}
A word $w \in \LM$ is a \emph{synchronizing word} (or \emph{sync word)} for $M$
if $u(w) = 0$; that is, if the observer knows the current state of the machine 
with certainty after observing $w$.
\end{Def}

We denote the set of $M$'s infinite synchronizing sequences as $\SYN(M)$ and the set of $M$'s infinite weakly synchronizing sequences as $\WSYN(M)$:
\begin{align*}
& \SYN(M) = \{ \future \in \mathcal{L}_{\infty}(M) : u(\future^L) = 0 \mbox{ for some } L\} ,~ and \\
& \WSYN(M) = \{ \future \in \LiM : u(\future^L) \rightarrow 0 \mathrm{~as~} L \rightarrow \infty \} ~.
\end{align*} 

\begin{Def}
\label{Def:ExactSync}
An \eM\ M is \emph{exactly synchronizable} (or simply \emph{exact}) if
$\Prob(\SYN(M)) = 1$; that is, if the observer synchronizes to almost every
(a.e.) sequence the machine generates in finite time.
\end{Def}

\begin{Def}
\label{Def:AsymptoticSync}
An \eM\ M is \emph{asymptotically synchronizable} if $\Prob(\WSYN(M)) = 1$;
that is, if the observer's uncertainty in the machine state vanishes asymptotically
for a.e. sequence the machine generates. 
\end{Def}

\paragraph*{Examples:}
\begin{itemize}
\item The Even Process \eM\ is an exact machine. Any word containing a $0$
	is a sync word for this machine, and almost every $\future$ generated by
	this machine contains at least one $0$.
\item The ABC machine (Fig. \ref{fig:ABC}) is not exactly synchronizable, but
	it is asymptotically synchronizable. 
\end{itemize}

\begin{figure}[h]
\centering
\includegraphics[scale=0.6]{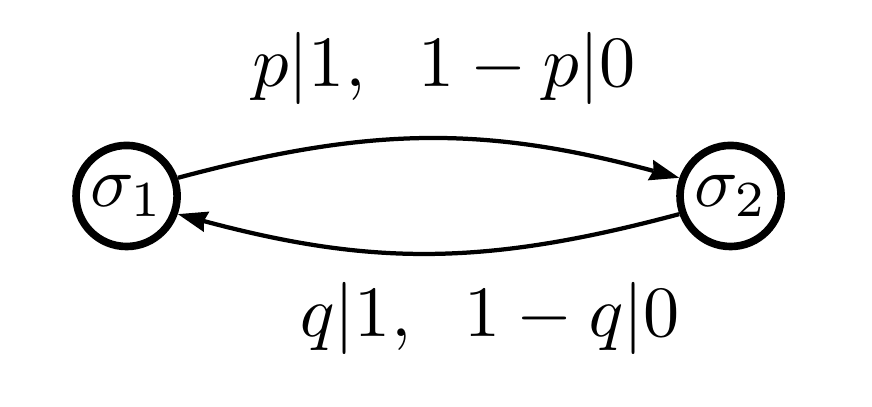}
\caption{The Alternating Biased Coin (ABC) machine: The process it generates
  can be thought of as alternately flipping two coins of different biases,
  $p \neq q$.
  }
\label{fig:ABC}
\end{figure}

We note that any machine with a single state is necessarily exact since the observer 
is synchronized before observing any output. However, the synchronization question in 
this case is moot. Thus, when discussing exact or nonexact machines,
we will always assume $N \geq 2$. Also, since any finite word $w \in \L(M)$ is contained 
in a.e. infinite sequence $\future$ an \eM\ $M$ generates, we know that a machine $M$ is exact 
if (and only if) it has some sync word $w$ of finite length. 

One final important quantity to monitor during synchronization is the observer's
average uncertainty in the machine state after seeing a length-$L$ block of
output.

\begin{Def}
The observer's \emph{average state uncertainty} at time L is:
\begin{align*}
\AvgUncertainty(L) & \equiv H[\CS_L|\Future^L] \\
  & = \sum_{\{ \future^L \}} \Prob(\future^L) u(\future^L) ~.
\end{align*}
\end{Def}

\vspace{-0.3in}
\subsection{Prediction} 
\vspace{-0.1in}

A process's intrinsic randomness is measured by its entropy rate and that, in
turn, determines how well an observer can predict its behavior.

\begin{Def} 
The \emph{block entropy} $H(L)$ for a stationary process $\Process$ is:
\begin{align*}
H(L) & \equiv H[\Future^L] \\
  & = - \sum_{\{ \future^L \} } \Prob(\future^L) \log_2 \Prob(\future_L) ~.
\end{align*}
\end{Def}

\begin{Def} 
The \emph{entropy rate} $\hmu$ is the asymptotic average entropy per symbol:
\begin{align*}
\hmu & \equiv \lim_{L \to \infty} \frac{H(L)}{L} \\
  & = \lim_{L \to \infty} H[\MS_L|\Future^L] ~.
\end{align*}
\end{Def}

\begin{Def} 
Its \emph{length-$L$ approximation} is:
\begin{align*}
\hmu(L) & \equiv H(L) - H(L-1) \\
  & = H[\MS_{L-1}|\Future^{L-1}] ~.
\end{align*}
That is, $\hmu(L)$ is the observer's average uncertainty in the next symbol to
be generated after observing the first $L-1$ symbols.
\end{Def}

For any stationary process, $\hmu(L)$ monotonically decreases to the limit
$\hmu$ \cite{Cove06a}. However, the form of convergence depends on the 
process. The lower the value of $\hmu$ a process has, the better an observer's
predictions of the process will be asymptotically. The faster $\hmu(L)$ converges to
$\hmu$, the faster an observer's predictions will reach this optimal asymptotic level. 
Since we are often interested in making predictions after only a finite sequence 
of observations, the source's true entropy rate $\hmu$, as well as the rate of 
convergence of $\hmu(L)$ to $\hmu$, are both important properties.

Now, for an $\epsilon$-machine, an observer's prediction of the next output
symbol is a direct function of the probability distribution over machine
states induced by the previously observed symbols. That is, 
\begin{align*}
\Prob(\MS_L = \ms & | \Future^L = \future^L) \\
  & = \sum_k \Prob(\ms|\causalstate_k)
  \Prob(\CS_L = \causalstate_k | \Future^L = \future^L) ~.
\end{align*}
Hence, the better an observer knows the machine state at the current time, the
better it can predict the next symbol the machine generates. And, on average,
the closer $\AvgUncertainty(L)$ is to $0$, the closer $\hmu(L)$ is to $\hmu$.
Therefore, the rate of convergence of $\hmu(L)$ to $\hmu$ for an
$\epsilon$-machine is closely related to the average rate of synchronization.
This is one of the primary motivations for studying the synchronization
problem. 

\vspace{-0.2in}
\section{An Intuitive Picture}
\label{sec:Picture}
\vspace{-0.1in}

In this section we present an intuitive picture of the synchronization process
and use it to derive a formula for the conditional state distribution $\phi(w)$.
The basic idea is illustrated schematically in Fig. \ref{fig:SyncPicture} for a
hypothetical $5$-state machine.

\begin{figure}[h]
\centering
\includegraphics[scale=0.43]{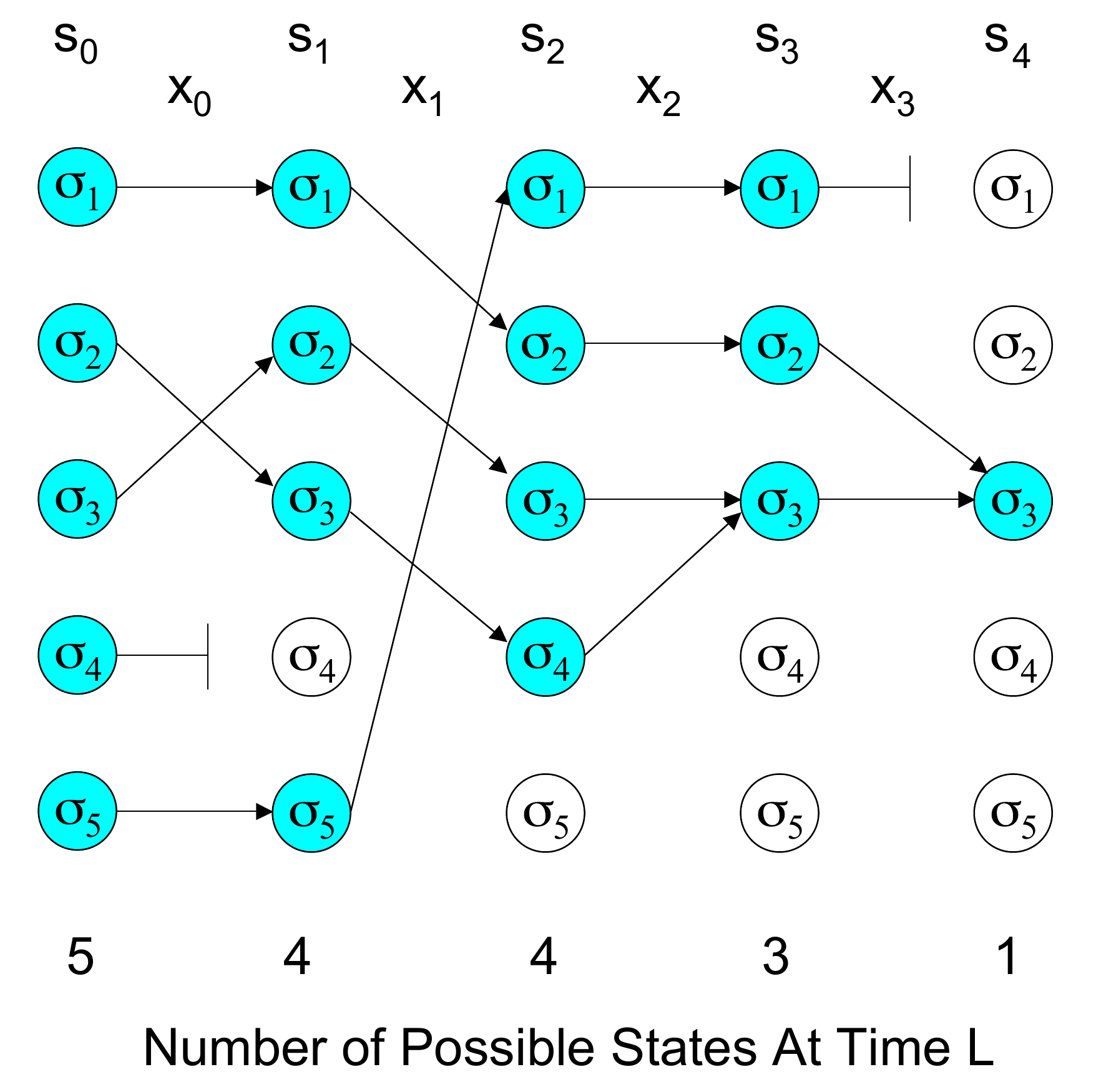}
\caption{Synchronization illustrated for a $5$-state machine.}
\label{fig:SyncPicture}
\end{figure}

Initially, the observer does not know the machine state $\CS_0$, so all five
states $\{\causalstate_1, \causalstate_2, \causalstate_3, \causalstate_4,
\causalstate_5 \}$ are possible. After seeing the first symbol $\ms_0$,
there are only four possibilities for
$\CS_1$---$\{\causalstate_1, \causalstate_2, \causalstate_3, \causalstate_5
\}$---since only four of the five states may generate this symbol. After seeing
the second symbol $\ms_1$, a different set of four states is
possible---$\{\causalstate_1, \causalstate_2, \causalstate_3,\causalstate_4 \}$.
After seeing the third symbol $\ms_2$, there
are only three possibilities
$\{\causalstate_1, \causalstate_2, \causalstate_3 \}$ for $\CS_3$, since two of
the state paths merge on seeing the third symbol. Finally, after seeing the
fourth symbol $\ms_3$, two more state paths merge and another dies, so
there is only one possibility $\{\causalstate_3\}$ for $\CS_4$.
The observer has synchronized.

The transition function $\delta(\causalstate_k,\ms)$ is defined by the relation
$\causalstate_k \goesonx \delta(\causalstate_k,\ms)$ and the word transition
function $\delta(\causalstate_k,w)$ by the relation
$\causalstate_k \goesonw \delta(\causalstate_k,w)$.
In general, for each possible initial state $\causalstate_k$ and each
$\future^L$ there is a state path $p^k$ following $\causalstate_k$ under
$\future^L$. That is, $p^k = p^k_0, p^k_1, \ldots, p^k_L$, where
$p^k_0 = \causalstate_k$, $p^k_1 = \delta(p^k_0,\ms_0)$,
$p^k_2 = \delta(p^k_1,\ms_1)$, and so on. An observer synchronizes exactly once
all, but one, of these paths have either merged or died.

If a machine is not exactly synchronizable, then it is impossible for all
paths to merge or die and, at any finite time $L$, there are at least
two possible nonmerged paths remaining. However, it is still possible for an
observer to synchronize to such a machine asymptotically. To understand how
this happens we need to know the relative probabilities of being in each of
the possible remaining states at a given time. In general, the probability of
starting in state $\causalstate_k$ and generating the word $\future^L$ is:
\begin{align*}
\Prob(p^k) & \equiv \Prob(\CS_0 = \causalstate_k, \Future^L = \future^L) \\
  & = \pi_k \cdot \Prob(\future^L|\causalstate_k) ~.
\end{align*}
These probabilities will be exactly $0$ if and only if the path $p^k$ dies by
the $L_{th}$ symbol. Typically, however, all these probabilities decay---in
fact, decay exponentially fast---as $L \rightarrow \infty$. For nonexact
synchronization, we are concerned not with absolute path probabilities, but
with their relative or normalized probabilities. The probability of ending up
in state $\causalstate_j$ at time $L$ is simply the sum of the normalized
probabilities of all paths ending up in state $\causalstate_j$. That is, for
any word $w = \future^L$ we have:
\begin{align}
\phi(w)_j & \equiv \Prob(\CS_L = \causalstate_j |\Future^L = w) \nonumber \\
  & = \frac{\Prob(\CS_L = \causalstate_j, \Future^L = w)}{\Prob(\Future^L=w)} \nonumber \\
  & = \frac{\sum_k \pi_k \cdot \Prob(w|\causalstate_k) \cdot I(w,k,j)}
	{\sum_i \pi_i \cdot \Prob(w | \causalstate_i)} \nonumber \nonumber \\
  & =  \frac{\sum_k \Prob(p^k) \cdot I(w,k,j)}{\sum_i \Prob(p^i)} ~.
\label{eq:ProbStateGivenW}
\end{align}

For nonexact asymptotic synchronization, then, the important quantities to
consider are the relative probabilities of all paths that never merge or die.
If a nonexact machine is asymptotically synchronizable, then for a.e. $\future$ 
there must be some state $\cs_k$ such that the ratio of the path probabilities:
\begin{align*}
\frac{\Prob (p^k(\future^L))}{\Prob(p^j(\future^L))}
  \rightarrow \infty~,
\end{align*}
as $L \rightarrow \infty$, for any path $p^j$ which does not eventually merge with
$p^k$ or die. Since $\pi_k / \pi_j$ is bounded for all states $\causalstate_k$ and 
$\causalstate_j$, the initial state is unimportant for asymptotic synchronization. 
The question is whether, on average, the transition probabilities for one path are 
greater than those of the other. If, on average, the transition probabilities for path
$p^k$ are $c$ ($ > 1$) times as likely as the transition probabilities for path
$p^j$ then, for large $L$, $\Prob(p^k)/\Prob(p^j) \sim c^L$. Intuitively, this
is why synchronization occurs exponentially fast. Establishing this, as we will
see, requires some care, however.
 

\section{The Entropy Rate Formula}
\label{sec:EntropyRate}

In this section we derive a formula for the entropy rate of a finite-state
\eM. Although an analogous expression has been previously established in
similar contexts (see, e.g., Ref. \cite{Shan62}), we provide a derivation
as well for completeness. The proof presented here is also somewhat simpler
than the original in Ref. \cite{Shan62}. A proof quite similar to ours for
unifilar Moore hidden Markov models (as opposed to the edge-label or
Mealy models we use) is given in Ref. \cite{Mass75a}.   

\begin{Prop}
For any \eM\ $M$,
\begin{align}
\hmu & = H[\MS_0|\CS_0] \nonumber \\
  & \equiv \sum_k \pi_k h_k ~,
\end{align}
where $h_k = H[\MS_0|\CS_0 = \causalstate_k]$. 
\end{Prop}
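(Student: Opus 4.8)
The plan is to evaluate the block entropy $H(L) = H[\Future^L]$ essentially exactly, by conditioning on the initial state and exploiting unifilarity, and then divide by $L$. The central identity is
\begin{align}
H[\Future^L|\CS_0] = L\, H[\MS_0|\CS_0] = L \sum_k \pi_k h_k ~.
\end{align}
To prove it, expand by the chain rule, $H[\Future^L|\CS_0] = \sum_{t=0}^{L-1} H[\MS_t|\Future^t,\CS_0]$, and analyze a single term. By unifilarity the pair $(\CS_0,\Future^t)$ determines $\CS_t = \delta(\CS_0,\Future^t)$, so adjoining $\CS_t$ to the conditioning is free: $H[\MS_t|\Future^t,\CS_0] = H[\MS_t|\Future^t,\CS_0,\CS_t]$. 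By the generative hidden-Markov structure of $M$ the emitted symbol $\MS_t$ depends on the past only through the current state $\CS_t$, so this equals $H[\MS_t|\CS_t]$; and since the process starts in the stationary distribution, $\CS_t \sim \pi$, whence $H[\MS_t|\CS_t] = H[\MS_0|\CS_0] = \sum_k \pi_k h_k$ for every $t$. Summing over $t$ gives the identity.

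Next I would turn this into a statement about $H(L)$ itself. Expanding the joint entropy $H[\Future^L,\CS_0]$ in the two obvious orders,
\begin{align}
H[\CS_0] + H[\Future^L|\CS_0] = H[\Future^L,\CS_0] = H[\Future^L] + H[\CS_0|\Future^L] ~,
\end{align}
and substituting the identity yields
\begin{align}
H(L) = L \sum_k \pi_k h_k + H[\CS_0] - H[\CS_0|\Future^L] ~.
\end{align}
Since there are $N$ states, both $H[\CS_0]$ and $H[\CS_0|\Future^L]$ lie in $[0,\log N]$ and are therefore bounded uniformly in $L$. Dividing by $L$ and letting $L \to \infty$ kills the two correction terms, so $\hmu = \lim_{L\to\infty} H(L)/L = \sum_k \pi_k h_k = H[\MS_0|\CS_0]$, as claimed.

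The one step requiring care is $H[\MS_t|\Future^t,\CS_0] = H[\MS_t|\CS_t]$, which I would justify by separating the two ingredients: (i) unifilarity makes $\CS_t$ a deterministic function of $(\CS_0,\Future^t)$, so it may be added to the conditioning without changing the entropy; and (ii) the definition of how $M$ generates its process makes $(\MS_t,\CS_{t+1})$ conditionally independent of $(\CS_0,\MS_0,\ldots,\MS_{t-1})$ given $\CS_t$, so once $\CS_t$ appears in the conditioning the remaining history may be discarded. The other ingredients---the chain rule, stationarity of the state chain (so that $\CS_t \sim \pi$ for all $t$), and the crude bound $H[\CS_0|\Future^L] \leq \log N$---are routine. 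One could instead work directly from $\hmu = \lim_L H[\MS_{L-1}|\Future^{L-1}]$, but then one must control $I[\MS_{L-1};\CS_0|\Future^{L-1}]$, which is messier than the block-entropy bookkeeping above.
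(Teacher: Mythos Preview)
Your proof is correct and is essentially the same argument as the paper's. Both rest on the identity $H[\MS_t|\CS_0,\Future^t] = H[\MS_t|\CS_t] = H[\MS_0|\CS_0]$, obtained via unifilarity and stationarity, so that $H[\Future^L|\CS_0] = L\,H[\MS_0|\CS_0]$; the paper then sandwiches $H[\Future^L]$ between $H[\Future^L|\CS_0]$ and $H[\CS_0]+H[\Future^L|\CS_0]$ and divides by $L$, while you write the exact relation $H(L) = L\,H[\MS_0|\CS_0] + H[\CS_0] - H[\CS_0|\Future^L]$---but this is the same bookkeeping, just rearranged.
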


\begin{proof}
We establish the bounds from above and below separately.

\emph{Upper bound}: $\hmu \leq H[\MS_0|\CS_0]$. We calculate directly:
\begin{align*}
\hmu 	& \equiv \lim_{L \to \infty} \frac{H[\Future^L]}{L} \\
		& \leq \lim_{L \to \infty} \frac{H[\CS_0, \Future^L]}{L} \\
		& \stackrel{(a)}{=} \lim_{L \to \infty} \frac{H[\CS_0] + \sum_{i= 0}^{L-1} H[\MS_i|\CS_0,\Future^i]}{L} \\
		& \stackrel{(b)}{=} \lim_{L \to \infty} \frac{H[\CS_0] + \sum_{i= 0}^{L-1} H[\MS_i|\CS_i]}{L} \\
		& \stackrel{(c)}{=} \lim_{L \to \infty} \frac{H[\CS_0] + H[\MS_0|\CS_0] \cdot L}{L} \\
		& = H[\MS_0|\CS_0] ~,
\end{align*}
where step (a) follows from the chain rule, step (b) from unifilarity, and step
(c) from stationarity.

\emph{Lower bound}: $\hmu \geq H[\MS_0|\CS_0]$. We have:
\begin{align*}
\hmu 	& \equiv \lim_{L \to \infty} \frac{H[\Future^L]}{L} \\
		& \geq \lim_{L \to \infty} \frac{H[\Future^L|\CS_0]}{L} \\
		& \stackrel{(a)}{=} \lim_{L \to \infty} \frac{\sum_{i= 0}^{L-1} H[\MS_i|\CS_0,\Future^i]}{L} \\
		& \stackrel{(b)}{=} \lim_{L \to \infty} \frac{\sum_{i= 0}^{L-1} H[\MS_i|\CS_i]}{L} \\
		& \stackrel{(c)}{=} \lim_{L \to \infty} \frac{H[\MS_0|\CS_0] \cdot L}{L} \\
		& = H[\MS_0|\CS_0] ~,
\end{align*}
where again step (a) follows from the chain rule, step (b) from unifilarity,
and step (c) from stationarity.
\end{proof}

\section{Averaged Asymptotic Synchronization} 
\label{sec:AvAsymSync}

The entropy rate formula says that (on average) an observer predicts 
asymptotically just as well as if it knew the machine state exactly:
\begin{align*}
\hmu & = \lim_{L \to \infty} H[\MS_L|\Future^L] \\
  & = H[\MS_0|\CS_0] ~.
\end{align*}
Intuitively, this suggests that an observer's average uncertainty
$\AvgUncertainty(L)$ in the machine state must vanish asymptotically. That is,
we should have:
\begin{equation}
\lim_{L \to \infty} \AvgUncertainty(L) = 0 ~. \nonumber
\end{equation}
These ideas are made rigorous below with a convexity argument.

The following notation will be used:
\begin{itemize}
\setlength{\topsep}{0mm}
\setlength{\itemsep}{0mm}
\item Let $p_k \equiv \Prob(\MS_0|\CS_0 = \causalstate_k)$ and
	$p_w \equiv \Prob(\MS_0|\CS_0 \sim \phi(w))$.
\item Let $h_k \equiv H[p_k]$ (as above), $h_w \equiv H[p_w]$, and
	$\th_w \equiv \sum_k \phi(w)_k h_k$.
\item Let $\AeL \equiv \{ w \in \LLM: u(w) < \epsilon \}$ and
	$\AeL^c \equiv \LLM/ \AeL$, the complement of $\AeL$.
\end{itemize}

We note that for any word $w$:
\begin{equation}
p_w = \sum_k p_k \phi(w)_k ~, 
\end{equation}
and, hence, by the concavity of the entropy function $H[\cdot]$:
\begin{align}
\label{Eq:h_vs_th}
h_w	& = H[p_w] \nonumber \\
	& = H \left[ \sum_k p_k \phi(w)_k \right] \nonumber \\
	& \geq \sum_k \phi(w)_k H[p_k] \nonumber \\
	& = \sum_k \phi(w)_k h_k \nonumber \\
	& = \th_w ~.
\end{align}
Also, for any length $L$: 
\begin{align}
\label{Eq:hmuLp1_sum}
\hmu(L+1) 	& = H[\MS_L|\Future^L] \nonumber \\
			& = \sumw \Prob(w) h_w ~,
\end{align}
and
\begin{align}
\label{Eq:hmu_sum}
\hmu 	& = \sum_k \pi_k h_k \nonumber \\
  		& = \sum_k \left( \sumw \Prob(w) \phi(w)_k \right) h_k \nonumber \\
  		& = \sumw \Prob(w) \sum_k \phi(w)_k h_k \nonumber \\
  		& = \sumw \Prob(w) \th_w ~.
\end{align}

\begin{Prop} For any finite-state \eM\ $M$:
\begin{equation}
\label{Prop:AvgUncertaintyConvergence}
\lim_{L \to \infty} \AvgUncertainty(L) = 0 ~.
\end{equation}
\end{Prop}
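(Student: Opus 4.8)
The plan is to drive everything from the already-established convergence $\hmu(L)\to\hmu$ together with \eqnref{Eq:h_vs_th}, \eqnref{Eq:hmuLp1_sum}, and \eqnref{Eq:hmu_sum}, via a Markov-type (Chebyshev) argument on the ``entropy defect'' $h_w-\th_w\geq 0$. The obstacle is that on the original machine $h_w-\th_w$ can vanish even when $u(w)>0$ --- exactly when $\phi(w)$ is spread over states that share the same next-symbol law --- so the argument first needs to be reduced to the case of minimum distinguishing length $L^*=1$. For that I would first note that $\AvgUncertainty(L)$ is nonincreasing: since $\CS_{L+1}=\delta(\CS_L,\MS_L)$ by unifilarity,
\begin{align*}
\AvgUncertainty(L+1) &= H[\CS_{L+1}|\Future^L,\MS_L] \\
  &\leq H[\CS_L|\Future^L,\MS_L] \\
  &\leq H[\CS_L|\Future^L] = \AvgUncertainty(L) .
\end{align*}
Then fix $n\geq L^*$ relatively prime to $p=\mathrm{period}(M)$ (such $n$ exist), so that $M^n$ is an \eM\ of minimum distinguishing length $1$; its state at step $L$ is $\CS_{nL}$ and an observer of $M^n$ through step $L$ has seen exactly $\Future^{nL}$, so the average state uncertainty of $M^n$ at step $L$ is $\AvgUncertainty(nL)$. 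Hence if the proposition holds for machines with $L^*=1$, then $\AvgUncertainty(nL)\to 0$, and monotonicity forces $\AvgUncertainty(L)\to 0$. So it suffices to treat $L^*=1$, in which case the distributions $p_k=\Prob(\MS_0|\CS_0=\causalstate_k)$ are pairwise distinct.

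Assuming $L^*=1$, the core step I would establish is a \emph{uniform gap}: for each $\epsilon$ with $0<\epsilon\leq\log_2 N$ there is $g(\epsilon)>0$ such that $u(w)\geq\epsilon$ implies $h_w-\th_w\geq g(\epsilon)$. This I would get by compactness: on the compact set $\Delta_\epsilon=\{\phi:\phi_k\geq 0,\ \sum_k\phi_k=1,\ H[\phi]\geq\epsilon\}$ the continuous function $F(\phi)=H[\sum_k p_k\phi_k]-\sum_k\phi_k h_k$ is strictly positive, because strict concavity of $H[\cdot]$ plus pairwise-distinctness of the $p_k$ forces $F(\phi)=0$ only at point masses, which $\Delta_\epsilon$ excludes; so $g(\epsilon)=\min_{\Delta_\epsilon}F>0$, and $h_w-\th_w=F(\phi(w))$, which is $\geq 0$ always by \eqnref{Eq:h_vs_th} and $\geq g(\epsilon)$ whenever $w\in\AeL^c$.

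To finish, I would combine \eqnref{Eq:hmuLp1_sum} and \eqnref{Eq:hmu_sum} to write
\begin{align*}
\hmu(L+1)-\hmu &= \sum_{w\in\LLM}\Prob(w)(h_w-\th_w) \\
  &\geq g(\epsilon)\sum_{w\in\AeL^c}\Prob(w) = g(\epsilon)\,\Prob(\AeL^c) .
\end{align*}
Since $\hmu(L)\downarrow\hmu$, the left side vanishes, so $\Prob(\AeL^c)\to 0$. Then, splitting $\AvgUncertainty(L)=\sum_w\Prob(w)u(w)$ over $\AeL$ and $\AeL^c$ and using $u(w)<\epsilon$ on $\AeL$ and $u(w)\leq\log_2 N$ always,
\begin{align*}
\AvgUncertainty(L) &\leq \epsilon + (\log_2 N)\,\Prob(\AeL^c) ,
\end{align*}
so $\limsup_L\AvgUncertainty(L)\leq\epsilon$; since $\epsilon>0$ is arbitrary and $\AvgUncertainty(L)\geq 0$, the limit is $0$. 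The main obstacle, as noted, is not any single calculation but the legitimacy of the gap lemma: it fails on the raw machine, and the power-machine-plus-monotonicity reduction is precisely what repairs it.
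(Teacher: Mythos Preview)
Your proposal is correct and follows essentially the same approach as the paper: the same compactness argument on $\Delta_\epsilon$ to extract a uniform gap $g(\epsilon)>0$ when $L^*=1$, and the same power-machine-plus-monotonicity reduction for the general case. The only cosmetic difference is that you argue directly (Markov-type bound on $\Prob(\AeL^c)$, then $\limsup\AvgUncertainty(L)\leq\epsilon$) whereas the paper wraps Case~(i) as a proof by contradiction via a subsequence; your packaging is arguably cleaner, and you also supply the monotonicity of $\AvgUncertainty(L)$ that the paper invokes without proof.
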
 

\begin{proof}
We first prove the statement under the assumption that $M$ has a minimum 
distinguishing length $L^* = 1$. We then use this result to establish the 
general case. 

\emph{Case (i): $M$ has minimum distinguishing length $L^* = 1$. }
The proof is by contradiction. If $\AvgUncertainty(L) \not\rightarrow 0$, then  
there must be some $\epsilon > 0$ for which $\Prob(\AeL^c) \not \rightarrow 0$. 
Hence, there exists some $\delta > 0$ and a subsequence $(L_i)_{i=1}^{\infty}$ 
of the $L$s such that $\Prob(\AeLi^c) \geq \delta$, for all $i$.

Let $\Delta$ be the unit simplex in $\R^N$:
\begin{equation*} 
\Delta = \left\{ \phi \in \R^N: \sum_k \phi_k = 1
  \mbox{ and } \phi_k \geq 0, \mbox{ for all } k \right\} ~,
\end{equation*}
and let:
\begin{equation*}
\Delta_{\epsilon} = \left\{ \phi \in \Delta: H[\phi] \geq \epsilon \right\} ~.
\end{equation*}
Define $f: \Delta_{\epsilon} \rightarrow \R$ by:
\begin{equation*}
f(\phi) = H\left[\sum_k \phi_k p_k\right] - \sum_k \phi_k H[p_k] ~,
\end{equation*}
so that, for any word $w$, $f(\phi(w)) = h_w - \th_w$.

Then, with respect to $\norm{\cdot}_1$, $f(\phi)$ is a continuous function on
$\Delta_{\epsilon}$ and $\Delta_{\epsilon}$ is a compact set. Therefore, we
know $f$ obtains its minimum $f^*$ at some point $\phi^* \in \Delta_{\epsilon}$.

Since $M$ has a minimum distinguishing length $L^* = 1$ and the entropy function
$H[\cdot]$ is strictly concave, we know $f(\phi) > 0$ for all
$\phi \in \Delta_{\epsilon}$. In particular, $f(\phi^*) = f^* > 0$. 

Hence, for each $i$ we have:
\begin{align}
\label{Eq:gap}
& \hmu(L_i+1) - \hmu \nonumber \\
& \stackrel{(a)}{=} \sum_{w \in \L_{L_i}(M)} \Prob(w) h_w - \sum_{w \in \L_{L_i}(M)} \Prob(w) \th_w \nonumber \\ 
& = \sum_{w \in \L_{L_i}(M)} \Prob(w) \cdot (h_w - \th_w) \nonumber \\
&  \stackrel{(b)}{\geq} \sum_{w \in \AeLi^c} \Prob(w) \cdot (h_w - \th_w) \nonumber \\
& \geq \Prob(\AeLi^c) \cdot f^* \nonumber \\
& \geq \delta f^* ~.
\end{align}
Step (a) follows from Eqs. (\ref{Eq:hmuLp1_sum})
and (\ref{Eq:hmu_sum}) and step (b) follows from Eq. (\ref{Eq:h_vs_th}).
Equation (\ref{Eq:gap}) implies that $\hmu(L) \not\rightarrow \hmu$, which is
a contradiction. Hence, we know $\lim_{L \to \infty} \AvgUncertainty(L) = 0$.

\emph{Case (ii): $M$ has minimum distinguishing length $L^* > 1$.}
Take $n \geq L^*$ and relatively prime to the period $p$ of $M$'s graph, so that 
$M^n$ is an \eM\ with a minimum distinguishing length of $1$. Let $Y_L$ be the RV for the
$L_{th}$ output symbol generated by the machine $M^n$, let $R_L$ be the RV for
$M^n$'s $L_{th}$ state, and let $\V(L) = H[R_L |\vY^L]$. Note that for any $L$:
\begin{equation}
\label{Eq:BlockEquality}
\V(L) = \AvgUncertainty(nL).
\end{equation}

Now, for a contradiction assume
$\lim_{L \to \infty} \AvgUncertainty(L) \not= 0$. Then, since
$\AvgUncertainty(L)$ is monotonically decreasing, we know there exists some
$\epsilon > 0$ such that $\AvgUncertainty(L) \geq \epsilon$, for all $L$. Thus,
by Eq. (\ref{Eq:BlockEquality}), we know that $\V(L) \geq \epsilon$ for all
$L$ as well, so $\V(L) \not\rightarrow 0$. However, since $M^n$ has a minimum
distinguishing length of 1, by case (i) above we know that $\V(L)$ must go to zero. 
This contradiction implies that $\lim_{L \to \infty} \AvgUncertainty(L) = 0$. 
\end{proof}
  
\section{The Nonexact Machine Synchronization Theorem}
\label{sec:SyncRateThm}

In this section we prove our primary result, the Nonexact Machine
Synchronization Theorem. This extends the weak asymptotic synchronization
result of Sec. \ref{sec:AvAsymSync} to show that synchronization occurs
exponentially fast for nonexact machines. The statement is quite analogous to
the Exact Machine Synchronization Theorem given in Ref. \cite{Trav10a}.
Essentially, it says that, except on a set of words $\future^L$ of exponentially
small probability, an observer's uncertainty after observing $\future^L$ is
exponentially small.  

The following notation will be used. $\Phi_L \equiv \phi(\Future^L)$ is the
random variable for the belief distribution over states induced by the first
length-$L$ word the machine generates, and $\CSb_L$ is the most likely state in
$\Phi_L$ (if a tie the lowest numbered state is taken).
$P_L \equiv \Prob(\CSb_L)$ is the probability of the most likely state in the
distribution $\Phi_L$, and $Q_L \equiv \Prob(\mbox{NOT } \CSb_L)$ is the
combined probability of all other states in the distribution $\Phi_L$. 
For example, if $\Phi_L = (0.2,0.7,0.1)$, then $\CSb_L = \causalstate_2$,
$P_L = 0.7$, and $Q_L = 0.3$. Realizations are denoted $\phi_L$, $\sb_L$,
$p_L$, and $q_L$, respectively. We also define $U_L = H[\Phi_L]$ 
and $u_L = H[\phi_L]$. 

\begin{The} (Nonexact Machine Synchronization Theorem)
For any nonexact \eM\ $M$,
\begin{enumerate}
\setlength{\topsep}{0mm}
\setlength{\itemsep}{0mm}
\item There exist constants $K_1 > 0$ and $0 < \alpha_1 < 1$ such that:
\begin{align*}
\Prob(Q_L > \alpha_1^L) \leq K_1 \alpha_1^L ~, \mbox{ for all $L \in \N$}.
\end{align*}
\item There exist constants $K_2 > 0$ and $0 < \alpha_2 < 1$ such that:
\begin{align*}
\Prob(U_L > \alpha_2^L) \leq K_2 \alpha_2^L ~, \mbox{ for all $L \in \N$}.
\end{align*}
\end{enumerate}
\label{the:NESyncRate}
\end{The}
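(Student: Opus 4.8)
The plan is to derive both statements from the averaged convergence $\AvgUncertainty(L)\to 0$ (the Proposition of Section~\ref{sec:AvAsymSync}) together with the multiplicative structure of the path probabilities exhibited in \eqnref{eq:ProbStateGivenW}. The key observation is that, once we have averaged synchronization, we can \emph{amplify} it: synchronization over a single block of length $m$ is, with positive probability, good; and concatenating independent-in-the-right-sense blocks makes the probability of \emph{not} synchronizing well decay geometrically. Since part~(2) follows from part~(1) because $U_L = H[\Phi_L] \leq -P_L\log_2 P_L - Q_L \log_2 Q_L + Q_L\log_2(N-1)$, which is $O(Q_L \log(1/Q_L))$ and hence bounded by $(\text{const})\cdot \alpha_1^{L}\cdot L$, absorbed into a slightly larger base $\alpha_2 \in (\alpha_1,1)$, the whole theorem reduces to part~(1).

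First I would reduce to the case $L^\ast = 1$ by passing to the power machine $M^n$ for $n\ge L^\ast$ relatively prime to $\mathrm{period}(M)$, exactly as in Case~(ii) of the previous proof: a bound $\Prob(Q_{nL}^{M}>\alpha^{nL})\le K\alpha^{nL}$ for $M^n$ transfers to $M$ along the blocking identity, and the intermediate lengths $L$ are handled by monotonicity of $Q_L$ in $L$ (once a path dies or merges it stays dead/merged, so $Q_L$ is nonincreasing along any fixed realization) at the cost of adjusting $K_1$. So assume $L^\ast = 1$. Next, choose $m$ so large that $\Prob(U_m > \epsilon) $ is small — possible by the Proposition — and more carefully, so that the event $G = \{U_m \le \epsilon\}$ has probability close to $1$ with $\epsilon$ chosen small enough that on $G$ the dominant state carries probability $\ge 1-\eta$ for a fixed small $\eta$. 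The heart of the argument is then to show that conditioned on the state at time $m$, the \emph{next} block of length $m$ again lands in $G$ with probability bounded below, \emph{uniformly over the conditioning}, and that the belief ratios multiply: if at time $jm$ the belief assigns mass $\ge 1-\eta$ to some state and the next block is ``good,'' then at time $(j+1)m$ the belief assigns mass $\ge 1-\eta'$ with $\eta' < \eta$ (in fact $\eta'\le c\,\eta$ for some $c<1$, because the non-dominant mass gets multiplied by a factor $<1$ coming from the spectral gap / distinguishability, while a fresh small error $O(\eta)$ is added). Iterating, after $L/m$ good blocks $Q_L$ is exponentially small; and the number of ``bad'' blocks among the first $L/m$ is, by a union bound over which blocks are bad combined with the uniform-in-conditioning lower bound on block goodness, exponentially unlikely to exceed $\beta L/m$ for small $\beta$. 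Putting these together gives $\Prob(Q_L > \alpha_1^L)\le K_1\alpha_1^L$.

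The main obstacle — and the place requiring genuine care — is making the ``belief ratios multiply'' step quantitative and uniform. Concretely: I need a constant $c<1$ such that whenever the belief $\phi$ at some time puts mass $\le \eta$ off a single state $\causalstate_k$, then \emph{on average over the next block} (weighted by the true block probabilities, i.e. by $p_w$) the off-$\causalstate_k$ mass in $\phi(\cdot w)$ is at most $c\eta + O(\eta)$ with the $O(\eta)$ term controllable, so that a Markov-inequality / Borel–Cantelli-style argument over blocks closes. The cleanest route is probably to work with the unnormalized path weights $\Prob(p^k)$ directly: the ratio $\Prob(p^j(\future^{L+m}))/\Prob(p^k(\future^{L+m}))$ equals $\Prob(p^j(\future^L))/\Prob(p^k(\future^L))$ times the ratio of block-transition probabilities along the two continuations, and the expected value of the \emph{logarithm} of that block ratio — under the measure generated from $\causalstate_k$ — is strictly negative by Jensen/strict concavity exactly because $M$ (equivalently $M^n$) has distinguishable states; this is essentially the $f(\phi)\ge f^\ast>0$ gap from the previous Proposition recast as a per-block drift. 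A large-deviations / Azuma-type bound on the sum of these log-ratio increments then yields that $\log\big(\Prob(p^j)/\Prob(p^k)\big)\le -c'L$ except with probability $\le K'e^{-c''L}$, which is precisely statement~(1). I would also need to handle paths $p^j$ that die (ratio $=\infty$ trivially, helps us) and merges (reduce the number of competing $k$'s, only helps), and to note that the finitely many pairs $(j,k)$ and the bounded $\pi_k/\pi_j$ contribute only constants. Translating the block-indexed bound back to every $L$ uses monotonicity of $Q_L$ as above.
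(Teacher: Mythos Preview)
Your overall strategy --- multiplicative structure of path weights, negative drift of the log-ratio $\log\bigl(\Prob(p^j)/\Prob(p^k)\bigr)$, then a large-deviations bound --- is the same skeleton the paper uses. The paper's execution differs in one important technical respect: rather than working pairwise with the chain $\bigl(p_i^{k},p_i^{j}\bigr)$ (whose irreducibility you would have to analyze), it collapses all competitors into the single functional $g(\ms,\causalstate_j)=\max_{\causalstate_k\in\CausalStateSet_{\ms,j}}\Prob(\ms|\causalstate_k)$ and computes the drift $\Ex_{\pi_{edge}}\!\left[\log\bigl(\Prob(\MS_0|\CS_0)/g(\MS_0,\CS_0)\bigr)\right]$ on the \emph{edge} chain $(\MS_L,\CS_L)$, which is irreducible by construction. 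That this drift is strictly positive is \emph{not} immediate from distinguishability --- it can fail for the original machine --- and the paper spends three lemmas showing that it holds for some power $M^n$ with $n$ chosen so that $\Prob(\Aen)$ is close to~$1$. Your ``by Jensen/strict concavity'' step hides exactly this work; the pairwise KL divergences $D(P_{p_i^k}\Vert P_{p_i^j})$ are indeed positive when $L^\ast=1$, but you still need a uniform lower bound along the trajectory of the pair chain, and that chain need not be irreducible.

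There is also a concrete error in your reduction step: $Q_L$ is \emph{not} nonincreasing along a realization. What is monotone is the number of live paths; the \emph{mass} $Q_L$ can increase when the next symbol is more likely under a minority state than under the current dominant one (e.g.\ $\phi_L=(0.6,0.2,0.2)$ with emission probabilities $(0.2,0.8,0.8)$ gives $\phi_{L+1}=(3/11,4/11,4/11)$ and $Q_{L+1}=7/11>0.4$). The paper avoids this by transferring the \emph{averaged} quantity $\AvgUncertainty(nL)=\V(L)$, which \emph{is} monotone in $L$, from $M^n$ to all lengths $m$, and only then applying Markov's inequality to recover the pointwise bound on $U_m$; Claim~1 is obtained last, from Claim~2, via the elementary relation $U_L\ge H(Q_L)$. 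Your route can be repaired --- over at most $n$ steps, $Q_L$ changes by a factor bounded by $(p_{\max}/p_{\min})^n$, which is absorbed into $K_1$ --- but the monotonicity justification you gave is wrong.
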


The proof strategy is as follows. We first take a power machine $M^n$ of
the machine $M$ with $\AvgUncertainty(n) = \epsilon \ll 1$, and prove the
theorem for the power machine. We then use the exponential convergence of the
power machine to establish the theorem in general with a subsequence-type
argument. 

The following lemma on large deviations of Markov chains will be critical. 

\begin{Lem} 
Let $Z_0, Z_1, ... $ be a finite-state, irreducible Markov chain, with state
set $R = \{r_1, ... , r_n \}$ and equilibrium distribution
$\rho = (\rho_1, ... , \rho_n)$. Let $F: R \rightarrow \R$, $Y_L = F(Z_L)$,
and $\Yb_L = \frac{1}{L} (Y_0 + ... + Y_{L-1})$. Define
$\mu_F = \Ex_{\rho}(F) = \sum_k \rho_k F(r_k)$. Then, for any $\epsilon > 0$, 
there exist constants $K > 0$ and $0 < \alpha < 1$ such that, for any state
$r_k$:
\begin{align*}
\Prob \left( |\Yb_L - \mu_F| > \epsilon|\CS_0 = r_k \right)
  \leq K \alpha^L ~, \mbox{ for all } L \in \N ~.
\end{align*}
\label{lem:FiniteIrreducibleConverge}
\end{Lem}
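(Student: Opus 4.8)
The plan is to prove the large-deviation bound for the additive functional $\Yb_L$ by the classical route through the exponential moment generating function, using the Perron--Frobenius theory for the "tilted" transition matrix. First I would fix $\epsilon > 0$ and reduce the two-sided estimate $\Prob(|\Yb_L - \mu_F| > \epsilon \mid Z_0 = r_k)$ to the two one-sided estimates $\Prob(\Yb_L - \mu_F > \epsilon \mid Z_0 = r_k)$ and $\Prob(\mu_F - \Yb_L > \epsilon \mid Z_0 = r_k)$; it suffices to bound each by $C\beta^L$ for some $C>0$, $0<\beta<1$, uniformly in the starting state $r_k$, since there are only finitely many starting states and we may take the worst constant. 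By replacing $F$ with $-F$ it is enough to treat the upper tail.

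For the upper tail, the key step is the exponential Markov (Chernoff) bound: for any $t \ge 0$,
\begin{align*}
\Prob\!\left(\Yb_L - \mu_F > \epsilon \mid Z_0 = r_k\right)
  \le e^{-tL(\mu_F + \epsilon)}\, \Ex\!\left[e^{t(Y_0 + \cdots + Y_{L-1})} \mid Z_0 = r_k\right].
\end{align*}
The conditional expectation on the right is a matrix product: if $M_t$ denotes the nonnegative matrix with entries $(M_t)_{ij} = P_{ij}\, e^{t F(r_i)}$ (where $P$ is the transition kernel of the chain), then $\Ex[e^{t(Y_0+\cdots+Y_{L-1})} \mid Z_0 = r_k] = e^{tF(r_k)} \big(e_k^{\mathsf{T}} M_t^{\,L-1} \mathbf{1}\big)$. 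Since the chain is irreducible, $M_t$ is an irreducible nonnegative matrix for every $t$, so by Perron--Frobenius it has a simple positive dominant eigenvalue $\lambda(t)$ with strictly positive left/right eigenvectors, and $e_k^{\mathsf{T}} M_t^{\,L} \mathbf{1} \le C(t)\, \lambda(t)^L$ for a constant $C(t)$ independent of $k$ (comparing $\mathbf{1}$ and $e_k$ to the Perron eigenvectors, using finiteness of the state set). Hence
\begin{align*}
\Prob\!\left(\Yb_L - \mu_F > \epsilon \mid Z_0 = r_k\right)
  \le C'(t)\, \Big(e^{-t(\mu_F + \epsilon)}\, \lambda(t)\Big)^{L},
\end{align*}
and it remains to choose $t>0$ making the base strictly less than $1$.

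For that final choice I would use the standard smoothness/convexity facts about $\Lambda(t) := \log\lambda(t)$: it is differentiable (indeed real-analytic, by analytic perturbation theory for the simple isolated eigenvalue), $\Lambda(0) = 0$ since $M_0 = P$ has Perron eigenvalue $1$, and $\Lambda'(0) = \mu_F$ (the derivative of the tilted Perron eigenvalue at $t=0$ equals the equilibrium mean of $F$; this is the Markov-chain analogue of the i.i.d.\ computation, obtained by differentiating the eigenvalue equation and pairing with the equilibrium/left eigenvector). Therefore $g(t) := t(\mu_F + \epsilon) - \Lambda(t)$ satisfies $g(0) = 0$ and $g'(0) = \epsilon > 0$, so $g(t_0) > 0$ for some small $t_0 > 0$, i.e.\ $e^{-t_0(\mu_F+\epsilon)}\lambda(t_0) = e^{-g(t_0)} < 1$. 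Taking $\beta$ to be the maximum of this base over the two one-sided problems gives the claimed $\alpha$, and the constant $K$ is the corresponding $C'(t_0)$ (maximized over the two cases and over starting states).

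I expect the main obstacle to be the verification of $\Lambda'(0) = \mu_F$ and, more generally, the clean justification that the dominant eigenvalue depends smoothly on $t$ with strictly positive Perron eigenvectors that let one pass from the matrix product to $\lambda(t)^L$ uniformly in the initial state; these are standard (Perron--Frobenius plus analytic perturbation theory, or equivalently a direct subadditivity argument for $\log\Ex[e^{tS_L}]$), but they are the only nontrivial analytic inputs. Everything else—the Chernoff bound, the reduction to one-sided tails, and the finiteness arguments for uniformity in $r_k$—is routine.
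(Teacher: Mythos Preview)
Your proposal is correct and in fact does considerably more than the paper: the paper does not prove this lemma at all, but simply cites Ref.~\cite{Glyn02a} for the aperiodic case and remarks that the periodic case follows by passing to length-$p$ blocks. By contrast, you give a self-contained Cram\'er--Chernoff argument via the tilted matrix $M_t$ and Perron--Frobenius theory, which is the classical route to large deviations for additive functionals of finite Markov chains. Two small remarks. First, your approach already covers the periodic case without the blocking trick: for any irreducible nonnegative matrix the Perron eigenvalue $\lambda(t)$ is simple with a strictly positive right eigenvector $v$, so the componentwise bound $\mathbf{1}\le (\min_i v_i)^{-1} v$ yields $e_k^{\mathsf T} M_t^{L}\mathbf{1}\le (\max_i v_i/\min_i v_i)\,\lambda(t)^{L}$ uniformly in $k$, irrespective of periodicity; this also makes precise the step you flagged as the main analytic input. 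Second, with your convention $(M_t)_{ij}=P_{ij}e^{tF(r_i)}$ one has $e_k^{\mathsf T} M_t^{L}\mathbf{1}=\Ex[e^{t\sum_{i=0}^{L-1}F(Z_i)}\mid Z_0=r_k]$ directly, so the extra factor $e^{tF(r_k)}$ and the exponent $L-1$ in your display are a harmless indexing slip. The cited reference uses instead a martingale/Hoeffding-type argument that produces explicit constants; your argument trades that explicitness for a shorter, more elementary derivation.
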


\vspace{-0.3in}
\begin{proof}
A similar statement (with more explicit values of the constants) is given in
Ref. \cite{Glyn02a} for a general class of Markov chains, which includes all
finite-state, irreducible, aperiodic chains. The result stated here follows
directly for finite-state, irreducible, aperiodic chains, and can be extended
to the periodic case by considering length $p$-blocks, where $p$ is the chain's
period. 
\end{proof}

\begin{Rem}
Note that since the deviation bound holds conditionally on any initial state
$r_k$, it also holds conditionally on any distribution over the initial state by
linearity. In particular, we apply this lemma assuming $Z_0 \sim \rho$.
\end{Rem}

Let us denote:
\begin{align*}
\Prob(\ms,\causalstate_k) & = \Prob(\CS_0 = \causalstate_k, \MS_0 = \ms) ~, \\
\Prob(\ms|\causalstate_k) & = \Prob(\MS_0 = \ms | \CS_0 = \causalstate_k) ~, \\
\Prob(\causalstate_k|\ms) & = \Prob(\CS_0 = \causalstate_k | \MS_0 = \ms) ~, \mbox{ and } \\
\causalstate_{\max,\ms} & = \mbox{argmax } \Prob(\causalstate_k | \ms) ~,
\end{align*}
where again the lowest numbered state is chosen in the case of a tie for
$\causalstate_{\max,\ms}$. Also, for any $\ms$ and $\causalstate_j$ with
$\Prob(\ms|\causalstate_j) > 0$, let us define:
\begin{align*}
\CausalStateSet_{\ms,j} & = \{ \causalstate_k \in \CausalStateSet:
  \Prob(\ms|\causalstate_k) > 0~,~
  \delta(\causalstate_k,\ms) \not= \delta(\causalstate_j,\ms) \} ~, \\
g(\ms, \causalstate_j) & = \max_{\causalstate_k \in \CausalStateSet_{\ms,j}}
  \Prob(\ms|\causalstate_k) ~, \mbox{ and } \\
f(\ms, \causalstate_j) & = \max_{\causalstate_k \in \CausalStateSet_{\ms,j}}
  \Prob(\causalstate_k|\ms) ~. 
\end{align*}
Note that $g(\ms, \causalstate_j)$ and $f(\ms, \causalstate_j)$ are both always
strictly positive for nonexact \eMs. And, also, that for any joint length-$L$
realization $(\future^L,\futurestates^L)$:

\begin{align}
\frac{p_L}{q_L} \geq
\frac{\Prob(\csr_0)}{\Prob(\mbox{NOT $\csr_0$})}
\prod_{i=0}^{L-1} \frac{\Prob(\ms_i|\csr_i)}{g(\ms_i,s_i)} ~,
\label{Eq:SyncNotSyncProduct}
\end{align}
by Eq. (\ref{eq:ProbStateGivenW}). Here, $\Prob(\csr_0) = \pi_k$ is the
stationary probability of the state $\csr_0 = \causalstate_k$ and
$\Prob(\mbox{NOT $\csr_0$}) = 1 - \Prob(\csr_0)$. 

Using Lemma \ref{lem:FiniteIrreducibleConverge} we now prove our desired
theorem under the (relatively strong) assumption that:
\begin{align}
\Ex_{\pi_{edge}} \left\{ \log_2 \left(
  \frac{\Prob(\MS_0 | \CS_0)}{g(\MS_0, \CS_0)} \right) \right\} > 0 ~.
\label{eq:StrongAssumption}
\end{align}
This assumption will later be satisfied for some power machine $M^n$.

\begin{Lem} 
Let $M$ be a nonexact \eM\ satisfying Eq. (\ref{eq:StrongAssumption}). Then $:$
\begin{enumerate}
\setlength{\topsep}{0mm}
\setlength{\itemsep}{0mm}
\item There exist constants $K_1 > 0$ and $0 < \alpha_1 < 1$ such that:
\begin{align*}
\Prob(Q_L > \alpha_1^L) \leq K_1 \alpha_1^L ~, \mbox{ for all } L \in \N.
\end{align*}
\item There exist constants $K_2 > 0$ and $0 < \alpha_2 < 1$ such that:
\begin{align*}
\Prob(U_L > \alpha_2^L) \leq K_2 \alpha_2^L ~, \mbox{ for all } L \in \N.
\end{align*}
\end{enumerate}
\label{lem:ConstantsAssumingExpectation}
\end{Lem}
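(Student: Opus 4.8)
The plan is to apply the large-deviation Lemma \ref{lem:FiniteIrreducibleConverge} to the edge Markov chain $M_{edge}$ with the function
$$F(\ms,\causalstate_k) = \log_2 \frac{\Prob(\ms|\causalstate_k)}{g(\ms,\causalstate_k)} ~,$$
whose equilibrium expectation $\mu_F = \Ex_{\pi_{edge}}(F)$ is strictly positive by assumption \eqnref{eq:StrongAssumption}. Here I must be slightly careful: $g(\ms,\causalstate_k)$ is a maximum over the competitor set $\CausalStateSet_{\ms,k}$, which is nonempty precisely because $M$ is nonexact (so no path ever dies or merges away all competitors), and it is strictly positive for the same reason; thus $F$ is a well-defined real-valued function on the finite state set of $M_{edge}$. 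Picking $\epsilon = \mu_F/2 > 0$, the lemma gives constants $K>0$, $0<\alpha<1$ with $\Prob\!\left(\Yb_L < \mu_F/2\right) \leq K\alpha^L$ for all $L$, where $\Yb_L = \frac1L\sum_{i=0}^{L-1} F(\MS_i,\CS_i)$ and the chain is started from $\pi_{edge}$ (using the Remark after the lemma). On the complementary event, $\sum_{i=0}^{L-1} F(\MS_i,\CS_i) \geq (\mu_F/2) L$, i.e.\ $\prod_{i=0}^{L-1} \Prob(\ms_i|\csr_i)/g(\ms_i,s_i) \geq 2^{(\mu_F/2)L}$.

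Next I feed this into the path-probability ratio bound \eqnref{Eq:SyncNotSyncProduct}. On the good event,
$$\frac{p_L}{q_L} ~\geq~ \frac{\Prob(\csr_0)}{\Prob(\mbox{NOT }\csr_0)} \cdot 2^{(\mu_F/2)L} ~\geq~ c \cdot 2^{(\mu_F/2)L} ~,$$
where $c = \min_k \pi_k / (1-\min_k \pi_k) > 0$ is a uniform lower bound on the initial-state ratio (finitely many states, all $\pi_k > 0$ by strong connectedness). Since $q_L = Q_L$ and $p_L \leq 1$, this forces $Q_L \leq (1/c)\, 2^{-(\mu_F/2)L}$ on the good event. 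Choosing $\alpha_1 \in (0,1)$ with $\alpha_1 > \max\{\alpha,\, 2^{-\mu_F/2}\}$ and $K_1$ large enough that $K_1 \alpha_1^L \geq (1/c) 2^{-(\mu_F/2)L}$ for all $L$ (and $\geq K\alpha^L$), we get: either we are in the bad event, of probability $\leq K\alpha^L \leq K_1\alpha_1^L$, or $Q_L \leq (1/c)2^{-(\mu_F/2)L} \leq \alpha_1^L$. Hence $\Prob(Q_L > \alpha_1^L) \leq K_1\alpha_1^L$, proving part 1. (One should handle small $L$ separately or just inflate $K_1$ so the bound is vacuous there, since the RHS may exceed $1$.)

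For part 2, I relate $U_L = H[\Phi_L]$ to $Q_L$. When $Q_L = q_L$ is small, the distribution $\phi_L$ puts mass $\geq 1-q_L$ on one state, and a standard estimate bounds the entropy of such a distribution: $H[\phi_L] \leq h(q_L) + q_L \log_2(N-1)$, where $h(\cdot)$ is the binary entropy function and $N = |\CausalStateSet|$. Since $h(q_L) \to 0$ like $-q_L\log_2 q_L$, there is a constant $C$ (depending only on $N$) with $U_L \leq C\, Q_L \log_2(1/Q_L)$ whenever $Q_L \leq 1/2$, say; in particular $U_L \leq C' Q_L^{1/2}$ for a suitable constant. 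Combining with part 1: on the good event $Q_L \leq (1/c)2^{-(\mu_F/2)L}$, so $U_L \leq C' (1/c)^{1/2} 2^{-(\mu_F/4)L}$, which is $\leq \alpha_2^L$ for $\alpha_2$ chosen with $2^{-\mu_F/4} < \alpha_2 < 1$; off the good event we just pay probability $\leq K\alpha^L$. Taking $\alpha_2 > \max\{\alpha, 2^{-\mu_F/4}\}$ and $K_2$ correspondingly large yields $\Prob(U_L > \alpha_2^L) \leq K_2\alpha_2^L$. The main obstacle is not any single step but bookkeeping the constants uniformly in $L$ (especially that the conclusions are trivially true for small $L$ where the RHS exceeds $1$) and making sure the entropy-versus-$Q_L$ inequality is stated cleanly; the conceptual core is simply: positive drift of $\log$ likelihood ratio $\Rightarrow$ exponential large-deviation control $\Rightarrow$ exponential decay of $Q_L$ and hence of $U_L$.
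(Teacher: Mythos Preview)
Your proposal is correct and follows essentially the same approach as the paper: apply \lemref{lem:FiniteIrreducibleConverge} to $M_{edge}$ with the function $F(\ms,\causalstate_k)=\log_2\!\big(\Prob(\ms|\causalstate_k)/g(\ms,\causalstate_k)\big)$, combine the resulting large-deviation bound with the path-ratio inequality \eqnref{Eq:SyncNotSyncProduct} to control $Q_L$, and then convert $Q_L$-control into $U_L$-control via an entropy bound. The only cosmetic difference is in Claim~2: the paper bounds $H[\phi_L]$ by an explicit computation of $H\big[(1-\alpha_1^L,\tfrac{\alpha_1^L}{N-1},\dots)\big]$ with a Taylor expansion, obtaining $U_L\le L C_1\alpha_1^L\le C_2\alpha^L$ for any $\alpha\in(\alpha_1,1)$, whereas you use the Fano-type bound $H[\phi_L]\le h(q_L)+q_L\log_2(N-1)$ and then $U_L\le C' Q_L^{1/2}$ to absorb the logarithmic factor---both accomplish the same thing.
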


\begin{proof} We first prove Claim 1 and then use this to show Claim 2.

\emph{Proof of Claim 1}:
Consider the edge-label Markov process $\Process_{edge}$ generated by the
edge machine $M_{edge}$. Let $Z_L = (\MS_L, \CS_L)$ denote the RV for the 
$L_{th} ~ M_{edge}$-state and let:
\begin{align*}
Y_L & = F(Z_L) \\
  & = \log_2 \left( \frac{\Prob(\MS_L|\CS_L)}{g(\MS_L,\CS_L)} \right) ~.
\end{align*}
We assume, of course, that
$(\MS_0,\CS_0) \sim \pi_{edge}$ or, equivalently, $\CS_0 \sim \pi$. 

By hypothesis, $\mu_F = \Ex_{\pi_{edge}}(F) = C > 0$. Take $\epsilon = C/2$.
By Lemma \ref{lem:FiniteIrreducibleConverge}, there exist constants $B_1 > 0$
and $0 < \eta_1 < 1$ such that:
\begin{align*}
\Prob(|\Yb_L - \mu_F| > \epsilon) \leq B_1 \eta_1^L ~,
  \mbox{ for all } L \in \N ~.
\end{align*}
Thus, for any $L$:
\begin{align*}
\Prob\left( \sum_{i=0}^{L-1} \log_2 \left(
  \frac{\Prob(\MS_i|\CS_i)}{g(\MS_i,\CS_i)}  \right) < \frac{C}{2} L \right) 
	& = \Prob(\Yb_L < C/2) \\
	& = \Prob(\Yb_L < \mu_F - \epsilon) \\
	& \leq \Prob(|\Yb_L - \mu_F| > \epsilon) \\
	& \leq B_1 \eta_1^L ~.
\end{align*}
Now, let $\vz^L = (\future^L, \futurestates^L)$ be any \emph{typical sequence}, i.e.:
\begin{align*}
\sum_{i=0}^{L-1} \log_2 \left( \frac{\Prob(\ms_i|\ms_i)}{g(\ms_i,\ms_i)}
  \right) \geq \frac{C}{2} L. 
\end{align*}
Taking logarithms of Eq. (\ref{Eq:SyncNotSyncProduct}) we find:
\begin{align*}
\log_2 \left( \frac{p_L}{q_L} \right)	
	& \geq \log_2 \left( \frac{\Prob(\csr_0)}{\Prob(\mbox{NOT $\csr_0$})}  \right) + 
	\sum_{i=0}^{L-1} \log_2 \left( \frac{\Prob(\ms_i|\csr_i)}{g(\ms_i,\csr_i)} \right) \nonumber \\ 	& \geq  \beta + \frac{C}{2}L~,				
\end{align*}
where $\beta \equiv \min_k \log_2 \left( \frac{\Prob(\CS_0 = \causalstate_k)}{\Prob(\CS_0 \not= \causalstate_k)} \right)$.
Or, equivalently:
\begin{align*}
\frac{p_L}{q_L} \geq 2^{\beta} \cdot 2^{\frac{C}{2}L} = B_2 \eta_2^L  ~,
\end{align*}
where $B_2 \equiv 2^{\beta} > 0$ and $\eta_2 \equiv 2^{C/2} > 1$. Thus:
\begin{align*}
q_L \leq \frac{q_L}{p_L} \leq B_3 \eta_3^L,
\end{align*}
where $B_3 \equiv 1 / B_2 > 0$ and $\eta_3 \equiv 1 / \eta_2 < 1$. 
Since this holds for any typical sequence $(\future^L,\futurestates^L)$ we
have, for each $L$:
\begin{align*}
\Prob(Q_L > B_3 \eta_3^L) \leq B_1 \eta_1^L.
\end{align*}
And, therefore, for any $1 > \alpha_1 > \max \{\eta_1,\eta_3\}$ there exists
some $K_1 = K_1(\alpha_1)$ sufficiently large that:
\begin{align*}
\Prob(Q_L > \alpha_1^L) \leq K_1 \alpha_1^L ~, \mbox{ for all } L \in \N ~.
\end{align*}

\emph{Proof of Claim 2}:
By Claim 1 we know there exist constants $K_1 > 0$ and $0 < \alpha_1 < 1$
such that:
\begin{align*}
\Prob(Q_L > \alpha_1^L) \leq K_1 \alpha_1^L ~, \mbox{ for all } L \in \N ~.
\end{align*}
Let us define:
\begin{align*}
V_L^+ & = \{ \future^L: q_L > \alpha_1^L \}
  \mbox{ and } V_L^- = \{ \future^L : q_L \leq \alpha_1^L\} ~.
\end{align*}

Take $L_1$ sufficiently large that $1 - \alpha_1^L \geq 1/2$, for all
$L \geq L_1$. Note that the first-order Taylor expansion about $x = 1$
of $\log_2(1 - \alpha_1^L)$ is $- \log_2(e)\alpha_1^L + O(\alpha_1^{2L}) \approx -1.44 \alpha_1^L + O(\alpha_1^{2L})$. 
Thus, there exists some $L_2 \in \N$ such that $|\log_2(1 - \alpha_1^L)| \leq 2 \alpha_1^L$ for all
$L \geq L_2$. Take $L_0 = \max \{L_1,L_2\}$. 

Then, for any $L \geq L_0 $ and any $\future^L \in V_L^-$, we have:
\begin{align}
  & H[\CS_L|\future^L] \nonumber \\
  & \stackrel{(a)}{\leq} H\left[\left(1 - \alpha_1^L, \frac{\alpha_1^L}{N-1}, \dots , \frac{\alpha_1^L}{N-1} \right) \right] \nonumber \\ 
  & = - \left[ (1 - \alpha_1^L) \log_2(1 - \alpha_1^L) + \alpha_1^L \log_2\left( \frac{\alpha_1^L}{N-1} \right) \right] \nonumber \\
  & = -(1 - \alpha_1^L) \log_2(1-\alpha_1^L) \nonumber \\
  &~~~~~~ - \alpha_1^L L \log_2(\alpha_1) + \alpha_1^L \log_2(N-1) \nonumber \\
  & \stackrel{(b)}{\leq} (1 - \alpha_1^L) 2 \alpha_1^L - \alpha_1^L L \log_2(\alpha_1) + \alpha_1^L \log_2(N-1) \nonumber \\
  & \leq LC_1 \alpha_1^L \nonumber \\
  & \leq C_2 \alpha^L ~,
  \label{eq:HSLxL_bound}
\vspace{-0.2in}
\end{align}
where $C_1 \equiv 2 - \log_2(\alpha_1) +  \log_2(N-1) > 0$,
step (a) follows from the fact that $1 - \alpha_1^L \geq 1/2$ for $L \geq L_1$,
and step (b) follows from the Taylor expansion bound on
$|\log_2(1 - \alpha_1^L)|$ for $L \geq L_2$.
In the last line, $\alpha$ may be chosen as any real number in the interval
$(\alpha_1,1)$ and $C_2 = C_2(\alpha)$ is chosen sufficiently large to ensure
the last inequality holds for all $L \geq L_0$.

Equation (\ref{eq:HSLxL_bound}) implies that, for all $L \geq L_0$:
\begin{align*}
\Prob(U_L \leq C_2 \alpha^L) & \geq \Prob(V_L^-) \geq 1 - K_1 \alpha_1^L ~.
\end{align*}
So, we know that, for all $L \geq L_0$:
\begin{align*}
\Prob(U_L > C_2 \alpha^L) \leq K_1 \alpha_1^L \leq K_1 \alpha^L ~.
\end{align*}
Therefore, for any $\alpha_2 \in (\alpha,1)$ and $L$ sufficiently large:
\begin{align*}
\Prob(U_L > \alpha_2^L) \leq K_1 \alpha_2^L ~.
\end{align*}
And, hence, there exists some $K_2 \geq K_1$ such that:
\begin{align*}
\Prob(U_L > \alpha_2^L) \leq K_2 \alpha_2^L ~, \mbox{ for all } L \in \N ~.
\end{align*}
\end{proof} 

\vspace{-0.1in}
To establish the theorem in general now, we show that for any machine $M$ there
exists a power machine $M^n$ satisfying Eq. (\ref{eq:StrongAssumption}).
To do so requires several additional lemmas.  

\begin{Lem} 
Let $M$ be a nonexact \eM. Then, for all $\ms$ and $\causalstate_j$
with $Pr(\ms|\causalstate_j) > 0 :$
\begin{align*}
g(\ms, \causalstate_j) \leq f(\ms, \causalstate_j)
	\frac{\Prob(\ms)}{\Prob(\causalstate_j)} \lambda^2 ~,
\end{align*}
where $\lambda \equiv \max_{i,j} \pi_i/\pi_j$ and $\Prob(\causalstate_j)$ and
$\Prob(\ms)$ are the respective stationary probabilities of the state
$\causalstate_j$ and symbol $\ms$: $\Prob(\causalstate_j )= \pi_j$ and
$\Prob(\ms)= \Prob(\MS_0 = \ms|\CS_0 \sim \pi)$. 
\label{lem:QPBound}
\end{Lem}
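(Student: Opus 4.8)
The plan is to pass between the two ``directions'' of conditional probability, $\Prob(\ms \mid \causalstate_k)$ and $\Prob(\causalstate_k \mid \ms)$, via Bayes' rule, and then control the resulting ratio of stationary weights crudely by $\lambda$. First I would fix $\ms$ and $\causalstate_j$ with $\Prob(\ms\mid\causalstate_j) > 0$ and invoke the already-noted fact that $g(\ms,\causalstate_j) > 0$ for a nonexact \eM, so that $\CausalStateSet_{\ms,j} \neq \emptyset$; then let $\causalstate_{k^*} \in \CausalStateSet_{\ms,j}$ be a state attaining the maximum defining $g$, i.e.\ $g(\ms,\causalstate_j) = \Prob(\ms\mid\causalstate_{k^*})$. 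Since the process is stationary, $\CS_0 \sim \pi$, and Bayes' rule gives
\begin{align*}
g(\ms,\causalstate_j) = \Prob(\ms\mid\causalstate_{k^*})
  = \frac{\Prob(\causalstate_{k^*}\mid\ms)\,\Prob(\ms)}{\pi_{k^*}} ~.
\end{align*}

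Next I would bound the two remaining factors. The posterior factor is controlled by $f$: because $\causalstate_{k^*}\in\CausalStateSet_{\ms,j}$ and $f(\ms,\causalstate_j)$ is the maximum of $\Prob(\causalstate_k\mid\ms)$ over \emph{all} $\causalstate_k\in\CausalStateSet_{\ms,j}$, we get $\Prob(\causalstate_{k^*}\mid\ms)\le f(\ms,\causalstate_j)$. For the factor $1/\pi_{k^*}$, I would write $1/\pi_{k^*} = (\pi_j/\pi_{k^*})(1/\pi_j)$ and use the definition of $\lambda$ together with $\lambda\ge 1$ to obtain $1/\pi_{k^*} \le \lambda/\pi_j \le \lambda^2/\pi_j = \lambda^2/\Prob(\causalstate_j)$. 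Substituting these two bounds into the displayed Bayes identity gives
\begin{align*}
g(\ms,\causalstate_j) \le f(\ms,\causalstate_j)\,\frac{\Prob(\ms)}{\Prob(\causalstate_j)}\,\lambda^2 ~,
\end{align*}
which is the claim.

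I do not expect a real obstacle here: the lemma is a one-line Bayes computation plus the coarse stationary-weight bound $\pi_i/\pi_j\le\lambda$. In fact a single factor of $\lambda$ already suffices, and the stated $\lambda^2$ holds \emph{a fortiori}; keeping $\lambda^2$ presumably just smooths the later arguments. The only point deserving a moment's care is the bookkeeping observation that the states attaining the maxima defining $g$ and $f$ need not coincide --- this is harmless precisely because $f(\ms,\causalstate_j)$ dominates the posterior weight of every state in $\CausalStateSet_{\ms,j}$, in particular of $\causalstate_{k^*}$.
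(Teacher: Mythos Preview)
Your argument is correct. Both your proof and the paper's hinge on the same Bayes identity $\Prob(\ms\mid\causalstate_k)=\Prob(\causalstate_k\mid\ms)\Prob(\ms)/\pi_k$, but the bookkeeping differs. The paper introduces \emph{two} auxiliary states in $\CausalStateSet_{\ms,j}$: $\causalstate_{k_1}$ maximizing $\Prob(\causalstate_k\mid\ms)/\Prob(\causalstate_k)$ (equivalently, your $\causalstate_{k^*}$) and $\causalstate_{k_2}$ maximizing $\Prob(\causalstate_k\mid\ms)$, then chains two $\lambda$-bounds: one passing from $\causalstate_{k_1}$ to $\causalstate_{k_2}$, and a second passing from $\causalstate_{k_2}$ to $\causalstate_j$. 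You bypass the intermediate state entirely by observing directly that $\Prob(\causalstate_{k^*}\mid\ms)\le f(\ms,\causalstate_j)$ and $\pi_j/\pi_{k^*}\le\lambda$. This is shorter and, as you note, actually proves the sharper bound with a single factor of $\lambda$; the stated $\lambda^2$ then holds a fortiori. The paper's two-step route produces the $\lambda^2$ as written, so the extra factor there is an artifact of the detour rather than a genuine necessity.
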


\begin{proof}
Fix $\ms$ and $\causalstate_j$.
Take $\causalstate_{k_1} \in \CausalStateSet_{\ms,j}$ such that:
\begin{align*}
\Prob(\causalstate_{k_1} | \ms) / \Prob(\causalstate_{k_1})
	= \max_{\causalstate_k \in \CausalStateSet_{\ms,j}}
	\Prob(\causalstate_{k} | \ms) / \Prob(\causalstate_k) ~,
\end{align*}
and take $\causalstate_{k_2} \in \CausalStateSet_{\ms,j}$ such that:
\begin{align*}
\Prob(\causalstate_{k_2} | \ms) = \max_{\causalstate_k \in \CausalStateSet_{\ms,j}}
	\Prob(\causalstate_{k} | \ms) ~.
\end{align*}
\vspace{-0.2in}
Then:
\begin{align*}
\frac{\Prob(\causalstate_{k_2}|\ms)}{\Prob(\causalstate_{k_2})} & \geq
	\frac{\Prob(\causalstate_{k_1}|\ms)}{\Prob(\causalstate_{k_2})} \\
	& = \frac{\Prob(\causalstate_{k_1}|\ms)}{\Prob(\causalstate_{k_1})} \cdot
	\frac{\Prob(\causalstate_{k_1})}{\Prob(\causalstate_{k_2})} \\
	& \geq \frac{\Prob(\causalstate_{k_1} | \ms)}{\Prob(\causalstate_{k_1})} \cdot
	1/\lambda ~,
\end{align*}
\vspace{-0.2in}
and
\begin{align*}
\frac{\Prob(\causalstate_{k_2}|\ms)}{\Prob(\causalstate_j)}
	& = \frac{\Prob(\causalstate_{k_2}|\ms)}{\Prob(\causalstate_{k_2})}
	\cdot \frac{\Prob(\causalstate_{k_2})}{\Prob(\causalstate_j)} \\
	& \geq \frac{\Prob(\causalstate_{k_2}|\ms)}{\Prob(\causalstate_{k_2})}
	\cdot 1/\lambda ~.
\end{align*}
Combining these relations we see that:
\begin{align*}
\frac{\Prob(\causalstate_{k_1} | \ms)}{\Prob(\causalstate_{k_1})}
	\leq \lambda^2
	\cdot  \frac{\Prob(\causalstate_{k_2} | \ms)}{\Prob(\causalstate_j)} ~.
\end{align*}
\vspace{-0.1in}
And, therefore:
\begin{align*}
g(\ms, \causalstate_j) & = \max_{\causalstate_k \in \CausalStateSet_{\ms,j}}
	\{ \Prob(\ms|\causalstate_k) \} \\
  & =  \max_{\causalstate_k \in \CausalStateSet_{\ms,j}}
  	\left\{ \Prob(\causalstate_k|\ms) \cdot \frac{\Prob(\ms)}{\Prob(\causalstate_k)} \right\} \\
  & = \max_{\causalstate_k \in \CausalStateSet_{\ms,j}}
  	\{ \Prob(\causalstate_k|\ms)/{\Prob(\causalstate_k)} \} \cdot \Prob(\ms) \\
  & = \frac{\Prob(\causalstate_{k1}|\ms)}{\Prob(\causalstate_{k1})}
  	\cdot \Prob(\ms) \\ 
  & \leq \lambda^2 \cdot
  	\frac{\Prob(\causalstate_{k2} | \ms)}{\Prob(\causalstate_j)} \cdot \Prob(\ms) \\
  & = f(\ms, \causalstate_j) \frac{\Prob(\ms)}{\Prob(\causalstate_j)}
         \lambda^2 ~.
\end{align*}
\end{proof}

\vspace{-0.2in}
Define $\Ae = \MeasAlphabet_{\epsilon,1} = \{ \ms \in \MeasAlphabet: u(\ms) < \epsilon \}$.
\begin{Lem} 
Let $M$ be a nonexact \eM\ such that:
\begin{enumerate}
\setlength{\topsep}{0mm}
\setlength{\itemsep}{0mm}
\item $\Prob(\Ae) > 1- \epsilon$, for some $\epsilon < 1/2$~, \mbox{ and }
\item $\Prob(\causalstate_{\max,\ms} | \ms)
	/\left(f(\ms, \causalstate_{\max,\ms}) \lambda^2 \right)
	\geq 2^{2 N^2 \lambda^2}$, for all $\ms \in \Ae$.
\end{enumerate}
Then,
$\Ex_{\pi_{edge}} \left\{ \log_2 \left( \frac{\Prob(\MS_0 | \CS_0)}{g(\MS_0, \CS_0)} \right) \right\} > 0$.
\label{lem:AEpsBound}
\end{Lem}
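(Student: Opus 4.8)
The plan is to expand the edge-stationary expectation symbol by symbol, rewrite each term in ``$\causalstate$-given-$\ms$'' form via \lemref{lem:QPBound}, and then split the alphabet into the good symbols $\Ae$ and the bad symbols $\Ae^c$. Since $\pi_{edge}(\ms,\causalstate_j) = \Prob(\MS_0 = \ms,\CS_0 = \causalstate_j) = \Prob(\ms)\Prob(\causalstate_j\mid\ms)$, the quantity to be bounded is $S \equiv \Ex_{\pi_{edge}}\{\log_2(\Prob(\MS_0\mid\CS_0)/g(\MS_0,\CS_0))\} = \sum_{\ms}\Prob(\ms)\sum_{j}\Prob(\causalstate_j\mid\ms)\log_2(\Prob(\ms\mid\causalstate_j)/g(\ms,\causalstate_j))$, the inner sum over all $j$ with $\Prob(\ms\mid\causalstate_j)>0$. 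Dividing through \lemref{lem:QPBound} and cancelling stationary factors gives $\Prob(\ms\mid\causalstate_j)/g(\ms,\causalstate_j) \geq \Prob(\causalstate_j\mid\ms)/(f(\ms,\causalstate_j)\lambda^2)$, so (as $\log_2$ is increasing) $S \geq \sum_{\ms}\Prob(\ms)\,T(\ms)$ with $T(\ms) \equiv \sum_j \Prob(\causalstate_j\mid\ms)\log_2(\Prob(\causalstate_j\mid\ms)/(f(\ms,\causalstate_j)\lambda^2))$. Each $f(\ms,\causalstate_j)$ is strictly positive because $M$ is nonexact, so these logarithms are well defined.

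For the bad symbols I would use only the crude bound $T(\ms)\geq -\log_2 N - 2\log_2\lambda$, valid for every $\ms$: indeed $\sum_j\Prob(\causalstate_j\mid\ms)\log_2\Prob(\causalstate_j\mid\ms) = -H[\phi(\ms)] \geq -\log_2 N$, while $-\log_2(f(\ms,\causalstate_j)\lambda^2)\geq -2\log_2\lambda$ since $f\leq 1$ and $\lambda\geq 1$. Condition~1 gives $\Prob(\Ae^c)<\epsilon$, so the bad symbols cost at most $\epsilon(\log_2 N + 2\log_2\lambda)$.

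The main work is the lower bound on $T(\ms)$ for $\ms\in\Ae$, obtained by isolating the index $j^*$ of $\causalstate_{\max,\ms}$. A preliminary observation is needed: $H[\phi(\ms)] = u(\ms)<\epsilon<1$ forces $\Prob(\causalstate_{\max,\ms}\mid\ms)>1/2$ (otherwise $H[\phi(\ms)]\geq\log_2(1/\Prob(\causalstate_{\max,\ms}\mid\ms))\geq 1$), whence the residual mass $Q(\ms)$ on the other states is $\leq 1/2$, and the coarse-graining bound $H[\phi(\ms)]\geq -Q(\ms)\log_2 Q(\ms)\geq Q(\ms)$ gives $Q(\ms)\leq u(\ms)<\epsilon$. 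With this, the $j^*$ term of $T(\ms)$ is, by Condition~2, at least $\Prob(\causalstate_{\max,\ms}\mid\ms)\cdot 2N^2\lambda^2 > N^2\lambda^2$, and the remaining terms are bounded below by $\sum_{j\neq j^*}\Prob(\causalstate_j\mid\ms)\log_2\Prob(\causalstate_j\mid\ms) - 2\log_2\lambda\sum_{j\neq j^*}\Prob(\causalstate_j\mid\ms) \geq -u(\ms) - 2\epsilon\log_2\lambda \geq -\epsilon(1+2\log_2\lambda)$. Hence $T(\ms) > N^2\lambda^2 - \epsilon(1+2\log_2\lambda)$ on $\Ae$.

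Assembling the pieces, and using $\Prob(\Ae) > 1-\epsilon > 1/2$, gives $S > \tfrac12 N^2\lambda^2 - \epsilon(1 + \log_2 N + 4\log_2\lambda)$; since $\epsilon<1/2$ it remains only to verify the elementary inequality $2N^2\lambda^2 > 1 + \log_2 N + 4\log_2\lambda$ for all integers $N\geq 2$ and reals $\lambda\geq 1$ (it suffices that $N^2\lambda^2\geq 1+\log_2 N$ and $N^2\lambda^2\geq 4\log_2\lambda$ separately), which closes the argument. I expect the main obstacle to be bookkeeping rather than any deep step: every ``$p\log p$''-type inequality must be pointed the right way, and one must confirm that the deliberately enormous constant $2^{2N^2\lambda^2}$ in Condition~2 really does dominate both accumulated errors, $\epsilon(\log_2 N + 2\log_2\lambda)$ from $\Ae^c$ and $\epsilon(1+2\log_2\lambda)$ from the competitor states — which is exactly why Condition~2 carries that exponent. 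The only genuinely non-routine sub-step is the low-entropy observation that $\Prob(\causalstate_{\max,\ms}\mid\ms)>1/2$ and $Q(\ms)\leq u(\ms)$ on $\Ae$, without which the $j^*$ term would not carry its full weight and the competitor error would be $O(1)$ rather than $O(\epsilon)$.
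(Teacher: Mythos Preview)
Your argument is correct and follows the same overall architecture as the paper's proof: apply \lemref{lem:QPBound} to pass from $g$ to $f$, split the alphabet into $\Ae$ and $\Ae^c$, and for $\ms\in\Ae$ isolate the $\causalstate_{\max,\ms}$ term. The technical bounds you use along the way, however, differ from the paper's in an instructive way.

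The paper never invokes the low-entropy observation you flag as ``the only genuinely non-routine sub-step''. Instead of $\Prob(\causalstate_{\max,\ms}\mid\ms)>1/2$, it uses only the trivial $\Prob(\causalstate_{\max,\ms}\mid\ms)\geq 1/N$, so the leading term contributes $\geq (1/N)\cdot 2N^2\lambda^2 = 2N\lambda^2$. For the error terms (both on $\Ae^c$ and for the non-max states on $\Ae$) the paper uses a uniform bound $\geq -N\lambda^2$, obtained by writing $p\log_2(p/(f\lambda^2))\geq p\log_2(p/\lambda^2)=\lambda^2\cdot(p/\lambda^2)\log_2(p/\lambda^2)\geq -\lambda^2$ and summing over $N$ states. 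These cruder bounds are exactly calibrated to the exponent in Condition~2: the good-symbol contribution is $\geq 2N\lambda^2 - N\lambda^2 = N\lambda^2$, the bad-symbol contribution is $\geq -N\lambda^2$, and the whole expectation exceeds $(1-2\epsilon)N\lambda^2>0$ with no residual inequality to check. Your sharper bounds ($>1/2$ for the max, $-(\log_2 N + 2\log_2\lambda)$ and $-\epsilon(1+2\log_2\lambda)$ for the errors) are valid but not needed, and they leave you with an endgame inequality to verify by hand.

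One small bookkeeping slip: from $S>\tfrac12 N^2\lambda^2-\epsilon(1+\log_2 N+4\log_2\lambda)$ and $\epsilon<1/2$, what you actually need is $N^2\lambda^2>1+\log_2 N+4\log_2\lambda$, not $2N^2\lambda^2>\cdots$. The stronger inequality is still elementary and true for $N\geq 2$, $\lambda\geq 1$, so the conclusion stands.
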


\begin{proof}
Applying Lemma \ref{lem:QPBound} we see:
\begin{align}
\Ex&_{\pi_{edge}} \left\{ \log_2 \left( \frac{\Prob(\MS_0 | \CS_0)}{g(\MS_0, \CS_0)} \right) \right\} 
  \nonumber \\
  & = \sumx \sum_j \Prob(\ms,\causalstate_j)
  \log_2 \left( \frac{\Prob(\ms|\causalstate_j)}{g(\ms, \causalstate_j)} \right) 
  \nonumber \\
  & \geq  \sumx \sum_j \Prob(\ms,\causalstate_j)
  \log_2 \left( \frac{\Prob(\causalstate_j|\ms)
  \Prob(\ms) / \Prob(\causalstate_j)}{f(\ms, \causalstate_j) \lambda^2
  \Prob(\ms)/\Prob(\causalstate_j)}   \right)  
  \nonumber \\
  & = \sumx \Prob(\ms) \sum_j \Prob(\causalstate_j|\ms)
  \log_2 \left(  \frac{\Prob(\causalstate_j|\ms)}
  {f(\ms, \causalstate_j) \lambda^2} \right)
  \nonumber \\
  & = \sum_{\ms \in \Ae} \Prob(\ms) \sum_j \Prob(\causalstate_j|\ms)
  \log_2 \left(
  \frac{\Prob(\causalstate_j|\ms)}{f(\ms, \causalstate_j) \lambda^2} \right)
  \nonumber \\
  &~~ + \sum_{\ms \in \Ae^c} \Prob(\ms) \sum_j \Prob(\causalstate_j|\ms)
  \log_2 \left( \frac{\Prob(\causalstate_j|\ms)}{f(\ms, \causalstate_j)
  \lambda^2} \right) ~.
\label{eq:i}
\end{align}

Now, for any $\ms \in \Ae^c$ we have:
\begin{align*}
\sum_j \Prob(\causalstate_j|\ms) & \log_2 \left(
  \frac{\Prob(\causalstate_j|\ms)}{f(\ms, \causalstate_j) \lambda^2}
  \right) \\
  & \geq \sum_j \lambda^2 \left[ \frac{\Prob(\causalstate_j|\ms)}{\lambda^2}
  \log_2 \left(  \frac{\Prob(\causalstate_j|\ms)}{\lambda^2} \right) \right] \\
  & \geq \sum_j \lambda^2 \cdot
  	-H \left( \frac{\Prob(\causalstate_j | \ms)}{\lambda^2} \right) \\
  & \geq \sum_j \lambda^2 \cdot (-1) \\
  & = - N \lambda^2 ~,
\end{align*}
where $H(\cdot)$ is the binary entropy function. So:
\begin{align}
\sum_{\ms \in \Ae^c} \Prob(\ms) \sum_j \Prob(\causalstate_j|\ms) & \log_2
  \left(  \frac{\Prob(\causalstate_j|\ms)}{f(\ms, \causalstate_j)
  \lambda^2} \right)
  \nonumber \\
  & \geq \Prob(\Ae^c) \cdot -N \lambda^2
  \nonumber \\
  & > - \epsilon N \lambda^2 ~.
\label{eq:II}
\end{align}

Also, if we let $\CausalStateSet^- \equiv
\CausalStateSet/\{\causalstate_{\max,\ms}\}$, then for any $\ms \in \Ae$:
\begin{align*}
  \sum_j & \Prob(\causalstate_j|\ms) \log_2 \left(\frac{\Prob(\causalstate_j|\ms)}{f(\ms, \causalstate_j) \lambda^2} \right) \\
  & = ~ \Prob(\causalstate_{\max,\ms} | \ms) \log_2
  \left(\frac{\Prob(\causalstate_{\max,\ms}|\ms)}{f(\ms,
  \causalstate_{\max,\ms}) \lambda^2} \right) \\
  & ~~~~~ + \sum_{\causalstate_j \in \CausalStateSet^-} \Prob(\causalstate_j|\ms) \log_2 \left(  \frac{\Prob(\causalstate_j|\ms)}{f(\ms, \causalstate_j) \lambda^2} \right) \\
  & \geq \frac{1}{N} 2 N^2 \lambda^2
  	+  \sum_{\causalstate_j \in \CausalStateSet^-} \Prob(\causalstate_j|\ms) 
      \log_2\left(  \frac{\Prob(\causalstate_j|\ms)}{f(\ms, \causalstate_j) \lambda^2} \right) \\
  & \geq \frac{1}{N} 2 N^2 \lambda^2 +  \sum_{\causalstate_j \in \CausalStateSet^-}  \lambda^2 \left[ \frac{\Prob(\causalstate_j|\ms)}{\lambda^2}
     \log_2  \left(  \frac{\Prob(\causalstate_j|\ms)}{\lambda^2} \right) \right] \\
  & \geq 2 N \lambda^2 + \sum_{\causalstate_j \in \CausalStateSet^-} \lambda^2 \cdot - H\left(\frac{\Prob(\causalstate_j|\ms)}{\lambda^2} \right) \\
  & \geq 2 N \lambda^2 - N \lambda^2 \\
  & = N \lambda^2 ~.
\end{align*}
And, hence:
\begin{align}
\sum_{\ms \in \Ae} \Prob(\ms) \sum_j \Prob(\causalstate_j|\ms) \log_2
  & \left(  \frac{\Prob(\causalstate_j|\ms)}{f(\ms, \causalstate_j)
  \lambda^2} \right)
  \nonumber \\
  & \geq \Prob(\Ae) \cdot N \lambda^2
  \nonumber \\
  & > (1-\epsilon) N \lambda^2 ~.
\label{eq:III}
\end{align}

Combining Eqs. (\ref{eq:i}), (\ref{eq:II}), and (\ref{eq:III}), we see that:
\begin{align*}
\Ex \left\{ \log_2 \left( \frac{\Prob(\MS_0 | \CS_0)}{g(\MS_0, \CS_0)} \right)  \right\}
  > (1-2 \epsilon) N \lambda^2 \equiv C' ~,
\end{align*}
where $C' > 0$ for $\epsilon < 1/2$.
Since $M$ is not exactly synchronizable, we know $g(\ms, \causalstate_j)$ is 
always strictly positive, so this expectation must be finite. Hence, there
exists some real number $C > C' > 0$ such that:
\begin{align*}
\Ex \left\{ \log_2 \left( \frac{\Prob(\MS_0 | \CS_0)}{g(\MS_0, \CS_0)} \right)  \right\} = C ~.
\end{align*}
\end{proof}

\begin{Rem}
In the above proof we implicitly assumed $\Prob(\ms,\causalstate_j) \not= 0$ 
for all $\ms$ and $j$. The sums for the expectation are, of course, computed
only over those $\ms$ and $j$ for which $\Prob(\ms,\causalstate_j) \not= 0$.
Terms involving pairs $(x,\causalstate_j)$ with $\Prob(x,\causalstate_j) = 0$
should be omitted.
\end{Rem}

\begin{Lem} For any nonexact \eM\ $M$, there exists some $n \in \N$ such that 
the power machine $M^n$ is an \eM\ with
\begin{align*}
\Ex \left\{ \log_2 \left( \frac{\Prob(Y_0 | R_0)}{g(Y_0, R_0)} \right) \right\} > 0 ~,
\end{align*}  
where $Y_L$ is the RV for the the $L_{th}$ output symbol generated by the
machine $M^n$ and $R_L$ is the RV for the $L_{th}$ $M^n$-state.
\label{lem:ExpectationBound}
\end{Lem}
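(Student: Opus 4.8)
The plan is to deduce the stated bound from \lemref{lem:AEpsBound} by applying the latter to a power machine $M^n$ for a suitable $n$ relatively prime to the period $p$ of $M$'s graph (so that $M^n$ is itself an \eM). Note first that passing from $M$ to $M^n$ leaves the state set, the stationary distribution $\pi$, the number of states $N$, and $\lambda = \max_{i,j}\pi_i/\pi_j$ all unchanged; in particular the threshold $2^{2N^2\lambda^2}$ appearing in condition~2 of \lemref{lem:AEpsBound} is a constant independent of $n$, and $u(w)$, the sets $\CausalStateSet_{w,j}$, and the maps $\delta(\cdot,w)$ mean the same thing whether $w$ is read as a length-$n$ block of $M$ or as a single output symbol of $M^n$. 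Also, a length-$m$ sync word of $M^n$ would be a length-$nm$ sync word of $M$, so $M^n$ inherits nonexactness from $M$. Thus it suffices to exhibit $n$ for which $M^n$ satisfies the two numbered hypotheses of \lemref{lem:AEpsBound}.

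First I would fix $\epsilon \in (0,1/2)$, depending only on $N$ and $\lambda$, small enough to force condition~2 for \emph{every} word $w$ (of any length) with $u(w) < \epsilon$. The key point is that small $u(w)$ forces the \emph{forward} belief $\phi(w)$ to concentrate: since $u(w) = H[\phi(w)]$ is at least the binary entropy of $\max_j\phi(w)_j$, and $\max_j\phi(w)_j \ge 1/N$, one obtains $\phi(w)_{j^\star} > 1 - \delta(\epsilon)$ for $j^\star = \arg\max_j\phi(w)_j$, with $\delta(\epsilon)\to 0$ as $\epsilon\to 0$. By unifilarity $\phi(w)_{j^\star} = \sum_{k:\,\delta(\causalstate_k,w)=\causalstate_{j^\star}}\Prob(\causalstate_k|w)$, so the total backward mass on initial states not mapping to $\causalstate_{j^\star}$ under $w$ is below $\delta(\epsilon)$. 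Since $\Prob(\causalstate_{\max,w}|w)\ge 1/N$, once $\delta(\epsilon) < 1/N$ the backward-dominant state $\causalstate_{\max,w}$ must itself map to $\causalstate_{j^\star}$; then every $\causalstate_k\in\CausalStateSet_{w,\causalstate_{\max,w}}$ has $\delta(\causalstate_k,w)\neq\causalstate_{j^\star}$, hence $\Prob(\causalstate_k|w)<\delta(\epsilon)$, so $f(w,\causalstate_{\max,w})<\delta(\epsilon)$. Combined with $\Prob(\causalstate_{\max,w}|w)\ge 1/N$, condition~2 holds provided $\delta(\epsilon) < \bigl(N\lambda^2\,2^{2N^2\lambda^2}\bigr)^{-1}$ (which in particular also secures $\delta(\epsilon)<1/N$); shrink $\epsilon$ accordingly.

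With $\epsilon$ now frozen, I would invoke \propref{Prop:AvgUncertaintyConvergence}: $\AvgUncertainty(L)\to 0$ as $L\to\infty$. Pick $n$ relatively prime to $p$ (there are arbitrarily large such $n$) with $\AvgUncertainty(n) < \epsilon^2$. Since $\AvgUncertainty(n) = \sum_w \Prob(w)\,u(w) = \Ex[u(\Future^n)]$ and $u(\cdot)\ge 0$, Markov's inequality gives $\Prob\bigl(u(\Future^n)\ge\epsilon\bigr) \le \AvgUncertainty(n)/\epsilon < \epsilon$; that is, the $M^n$-alphabet set $\Ae = \{w : u(w) < \epsilon\}$ has probability $>1-\epsilon$, which is condition~1. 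Hence $M^n$ is a nonexact \eM\ meeting both hypotheses of \lemref{lem:AEpsBound}, and that lemma yields $\Ex\{\log_2(\Prob(Y_0|R_0)/g(Y_0,R_0))\}>0$, as required.

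The step I expect to be the main obstacle is the implication ``$u(w)$ small $\Rightarrow$ $f(w,\causalstate_{\max,w})$ small,'' since $u(w)$ constrains the \emph{forward} belief $\phi(w)$ whereas $f$ and $\causalstate_{\max,w}$ are built from the \emph{backward} posterior $\Prob(\cdot|w)$ and the transition structure $\delta(\cdot,w)$: one must use the unifilarity identity together with the crude bound $\Prob(\causalstate_{\max,w}|w)\ge 1/N$ to pin the backward-dominant state onto the forward-dominant one, so that all ``distinguishable'' competitors are confined to the thin tail of $\Prob(\cdot|w)$. The second thing to watch is uniformity: the chosen $\epsilon$ (and hence $\delta(\epsilon)$) must work for all $w\in\Ae$ and be independent of $n$ — which is precisely what lets conditions~1 and 2 be arranged simultaneously, by first fixing $\epsilon$ from the $N,\lambda$-only bound of condition~2 and only then driving $\AvgUncertainty(n)$ below $\epsilon^2$.
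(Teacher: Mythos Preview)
Your proposal is correct and follows essentially the same approach as the paper: fix $\epsilon$ so that $u(w)<\epsilon$ forces condition~2 of \lemref{lem:AEpsBound} via the concentration of $\phi(w)$, then choose $n$ coprime to $p$ large enough (using $\AvgUncertainty(L)\to 0$) to get condition~1. Your explicit use of the unifilarity identity $\phi(w)_{j}=\sum_{k:\,\delta(\causalstate_k,w)=\causalstate_j}\Prob(\causalstate_k\mid w)$ to pass from forward concentration to $f(w,\causalstate_{\max,w})<\delta(\epsilon)$ simply spells out the step the paper records as the bare implication $\Prob(\causalstateb(\phi(y)))>1-\epsilonb\Rightarrow f(y,\causalstate_{\max,y})<\epsilonb$.
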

We also denote the alphabet of $M^n$ as $\B$ and the set $\Ae$ for the machine
$M^n$ as $B_{\epsilon}$; i.e., $B_{\epsilon} \sim \Aen$ for $M$. We define
$\causalstateb(\phi)$ to be the most likely state in a distribution $\phi$
over the machine states, and $\Prob(\causalstateb(\phi))$ to be the probability
of this state in the distribution $\phi$. For example, if
$\phi = (0.3,0.1,0.2,0.4)$, then 
$\causalstateb(\phi) = \causalstate_4$ and $\Prob(\causalstateb(\phi)) = 0.4$. 

\begin{proof}
Given any nonexact \eM\ $M$, 
\begin{enumerate} 
\setlength{\topsep}{0mm}
\setlength{\itemsep}{0mm}
\item Take $\epsilonb = 1 / \left( N \lambda^2 2^{2 N^2 \lambda^2} \right) $. 
\item Take $\epsilon$ small enough that
	$\Prob(\causalstateb(\phi)) > 1 - \epsilonb$ 
	for any state distribution $\phi$ with $H[\phi] < \epsilon$.
	(Without loss of generality, we may assume $\epsilon < 1/2$.)
\item For $\epsilon$ as above, take $n$ relatively prime to the period 
	$p$ of $M$'s graph and large enough such that
	$\Prob(\Aen) > 1 - \epsilon$.
	(Note that this is possible since
	$\lim_{L \to \infty} \Prob(\AeL) = 1$, for all $\epsilon > 0$,
	since $\lim_{L \to \infty} \AvgUncertainty(L) = 0$.)
\end{enumerate}
Then, $M^n$ is an \eM\ for the process $\Process^n$ and
$\Prob(B_{\epsilon})  = \Prob(\Aen) > 1 - \epsilon$.
Moreover, for all $y \in B_{\epsilon}$ we have:
\begin{align*}
H[\phi(y)] < \epsilon 	& \stackrel{(a)}{\implies} \Prob(\causalstateb(\phi(y))) > 1 - \epsilonb \\
				& \implies f(y,\causalstate_{\max,y}) < \epsilonb \\
				& \implies \frac{\Prob(\causalstate_{\max,y} |
				y)}{f(y,\causalstate_{\max,y})}  > \frac{1/N}{\epsilonb} \\
				& \stackrel{(b)}{\implies}
				\frac{\Prob(\causalstate_{\max,y} |
				y)}{f(y,\causalstate_{\max,y})}  >\lambda^2 2^{2 N^2 \lambda^2} \\
				& \implies \frac{\Prob(\causalstate_{\max,y} |
				y)}{f(y,\causalstate_{\max,y}) \lambda^2} > 2^{2 N^2 \lambda^2} ~, \\
\end{align*}
where step (a) follows from item 2 above and step (b) follows from our choice
of $\epsilonb$. Hence, by Lemma \ref{lem:AEpsBound}:
\begin{align*}
\Ex \left\{ \log_2 \left( \frac{\Prob(Y_0 | R_0)}{g(Y_0, R_0)} \right)
  \right\} = C > 0 ~.  
\end{align*}
(Note that $\lambda$ for $M$ is the same as  $\lambda$ for $M^n$,
since $M$ and $M^n$ have the same stationary distribution $\pi$.)
\end{proof}

Finally, in order to convert between $U_L$ and $Q_L$ convergence in
our theorem we need one last lemma. 

\begin{Lem} For any $\Phi_L$:
\begin{enumerate}
\setlength{\topsep}{0mm}
\setlength{\itemsep}{0mm}
\item If $Q_L \leq 1/2$, then $U_L \geq Q_L$. 
\item If $Q_L > 1/2$, then $U_L \geq H\left( 1 / N \right)$, where
	$H(\cdot)$ is the binary entropy function.
\end{enumerate}
\label{Lem:QLUL}
\end{Lem}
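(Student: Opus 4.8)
The plan is to reduce both claims to one fact --- coarse-graining a probability distribution cannot increase its Shannon entropy --- together with elementary properties of the binary entropy function $H(\cdot)$. Fix a realization $\phi_L$ of $\Phi_L$, let $p_L$ be the probability of its most likely state, and let $q_L = 1 - p_L$; note that $p_L \ge 1/N$ always, since the largest of $N$ probabilities summing to $1$ is at least their average. First I would merge the $N-1$ non-maximal states into a single cell, producing the two-point distribution $(p_L, q_L)$; since grouping outcomes only decreases entropy,
\[
u_L \;=\; H[\phi_L] \;\ge\; H[(p_L,q_L)] \;=\; H(p_L) \;=\; H(q_L) .
\]
If $q_L = 0$ then $u_L = 0 = q_L$ and both claims are trivial, so from here on assume $q_L > 0$.

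For Claim 1, with $q_L \le 1/2$, I would lower-bound $H(q_L)$ by discarding the nonnegative term $(1-q_L)\log_2\frac{1}{1-q_L}$ and using $\log_2\frac{1}{q_L} \ge 1$, which gives $H(q_L) \ge q_L\log_2\frac{1}{q_L} \ge q_L$; combined with the display, $u_L \ge q_L$. For Claim 2, with $q_L > 1/2$, we have $p_L < 1/2$, so $p_L \in [1/N, 1/2)$; since $H(\cdot)$ is increasing on $[0,1/2]$, the display gives $u_L \ge H(p_L) \ge H(1/N)$. Passing from the realization $\phi_L$ back to the random variable $\Phi_L$ is immediate, so this proves both parts.

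I do not expect a genuine obstacle, since the statement is elementary, but three small points deserve care. Degenerate realizations must be dispatched (entries of $\phi_L$ equal to $0$ via the $0\log 0 = 0$ convention, and $q_L = 0$ as above; $q_L = 1$ cannot occur because $p_L \ge 1/N$). The merging step should be justified as the grouping identity $u_L = H(q_L) + q_L\,H[\phi_L'/q_L]$ with the last term nonnegative, where $\phi_L'$ denotes the sub-distribution on the non-maximal states. And it is worth noting that Claim 2 is vacuous when $N = 2$, since then $p_L \ge q_L$ already forces $q_L \le 1/2$; moreover, in the non-vacuous case the sharper min-entropy bound $H[\phi_L] \ge \log_2(1/p_L)$ in fact yields $u_L > 1$, though only the weaker $H(1/N)$ is needed downstream.
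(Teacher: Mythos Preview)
Your proof is correct and follows essentially the same route as the paper: both establish $U_L \geq H(Q_L)$ by merging the non-maximal states, then bound $H(Q_L) \geq Q_L \log_2(1/Q_L) \geq Q_L$ for Claim~1, and use monotonicity of the binary entropy together with $Q_L \leq 1 - 1/N$ (equivalently $P_L \geq 1/N$) for Claim~2. Your treatment is slightly more careful about degenerate cases and the vacuity of Claim~2 when $N=2$, but the argument is the same.
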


\begin{proof}  Note that: 
\begin{align*}
U_L & = H[\Phi_L] \\
  & \geq H[(1- Q_L, Q_L, 0, \ldots ,0)] \\
  & = H(Q_L) ~.
\end{align*}
Since $H(Q_L) \geq Q_L \log_2 \left(1 / Q_L \right)$,
we know $H(Q_L) \geq Q_L$ for $Q_L \leq 1/2$. Since $H(Q_L)$
is monotonically decreasing on $[\frac{1}{2}, 1 - 1 / N]$ 
and $Q_L$ is at most $1 - 1 / N$, we know
$H(Q_L) \geq H(1- 1/N) = H(1/N)$ for $Q_L > 1/2$.  
\end{proof}

Using these lemmas we can now prove the primary result of this section.

\begin{proof} (Nonexact Machine Synchronization Theorem)
We first prove Claim 2 of the theorem and then use this to show Claim 1.

\emph{Proof of Claim 2}: Given any nonexact \eM\ $M$, take a power
machine $M^n$ as in Lemma \ref{lem:ExpectationBound} such that:
\begin{align*}
\Ex \left\{ \log_2 \left( \frac{\Prob(Y_0|R_0)}{g(Y_0,R_0)} \right) \right\} = C > 0  ~.
\end{align*}
Denote the random variable $U_L$ for the machine $M^n$ as $V_L$ and the quantity
$\AvgUncertainty(L)$ for the machine $M^n$ as $\V(L)$. By Lemma 
\ref{lem:ConstantsAssumingExpectation} we know there exist constants $B_1 > 0$
and $0 < \eta_1 < 1$ such that:
\begin{align*}
\Prob(V_L > \eta_1^L) \leq B_1 \eta_1^L ~, \mbox{ for all $L \in \N$.}
\end{align*}
A proof identical to that of Prop. \ref{prop:StateUncertaintyConverge} below
then shows there exists some $B_2 > B_1$ such that:
\begin{align*}
\V(L) \leq B_2 \eta_1^L, \mbox{ for all $L \in N$.}
\end{align*}
Or, equivalently:
\begin{align*}
\AvgUncertainty(nL) \leq B_2 \eta_1^L ~, \mbox{ for all $L \in \N$.}
\end{align*}
Taking  $\eta_2 = \eta_1^{1/n}$ we have:
\begin{align*}
\AvgUncertainty(m) \leq B_2 \eta_2^m ~,
\end{align*}
for any length $m$ that is an integer multiple of $n$. 
Since $\AvgUncertainty(m) \leq \log_2(N)$ for any $m$ and is
monotonically decreasing, it follows that:
\begin{align*}
\AvgUncertainty(m) \leq K \eta_2^m, \mbox{ for all $m \in \N$,}
\end{align*}
where $K \equiv \max \{ B_2, \log_2(N) \} / \eta_2^n$. 
Thus, by Markov's inequality, we know that for any $m \in \N$ and $t > 0$: 
\begin{align*}
\Prob(U_m > t) 	& \leq \frac{\Ex U_m}{t} = \frac{\AvgUncertainty(m)}{t} \leq \frac{K \eta_2^m}{t} ~.
\end{align*}
Taking $t = \eta_2^{m/2}$ yields:
\begin{align*}
\Prob(U_m > \alpha^m) \leq K \alpha^m ~,
\end{align*}
where $\alpha \equiv \eta_2^{1/2}$. \\

\emph{Proof of Claim 1}: By Claim 2 we know there exist constants $K > 0$ and
$0 < \alpha < 1$ such that:
\begin{align*}
\Prob(U_L > \alpha^L) \leq K \alpha^L, \mbox{ for all } L \in \N. 
\end{align*} 
Take $L_0$ large enough that $\alpha^{L_0} < H(1/N)$. 
Then, for all $L \geq L_0$ we have:
\begin{align*}
\Prob(Q_L & > \alpha^L)	= \Prob(\alpha^L < Q_L \leq 1/2) + \Prob(Q_L > 1/2) \\
	& \stackrel{(*)}{\leq} \Prob(U_L > \alpha^L, Q_L \leq 1/2) + \Prob(U_L \geq H(1/N)) \\
	& \leq \Prob(U_L > \alpha^L) + \Prob(U_L > \alpha^L) \\
	& \leq 2K \alpha^L ~,
\end{align*}
where step (*) follows from Lemma \ref{Lem:QLUL}. Hence, for some
$\Kt \geq 2K$ we have:
\begin{align*}
\Prob(Q_L > \alpha^L) \leq \Kt \alpha^L ~, \mbox{ for all } L \in \N ~.
\end{align*}
\end{proof}

\vspace{-0.3in}
\section{Consequences}
\vspace{-0.1in}
\label{sec:EntropyConv}

As a direct consequence of Thm. \ref{the:NESyncRate} we establish exponential
convergence results for $\AvgUncertainty(L)$ and $\hmu(L)$ analogous to those
in the exact case \cite{Trav10a}. We also use Thm. \ref{the:NESyncRate} 
to prove the existence of pointwise almost everywhere (a.e.) exponential
synchronization for nonexact machines. This establishes that any \eM\ is
indeed asymptotically synchronizable in the pointwise sense of
Def. \ref{Def:AsymptoticSync}. 

\vspace{-0.2in}
\subsection{Exponential Convergence of  $\AvgUncertainty(L)$}
\vspace{-0.1in}

\begin{Prop} 
For any nonexact \eM\ $M$ there exist constants $K > 0$ and $0 < \alpha < 1$
such that
\begin{align*}
\AvgUncertainty(L) \leq K \alpha^L ~, \mbox{ for all } L \in \N ~.
\end{align*}
\label{prop:StateUncertaintyConverge}
\end{Prop}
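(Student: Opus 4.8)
The plan is to read off the bound on $\AvgUncertainty(L)$ directly from the tail estimate in Claim 2 of Theorem \ref{the:NESyncRate}. First I would note the two facts that make this a one-line computation: by definition $\AvgUncertainty(L) = H[\CS_L|\Future^L] = \Ex[U_L]$, and since $\Phi_L$ is a probability distribution over the $N$ machine states we always have the deterministic bound $U_L = H[\Phi_L] \le \log_2 N$. So $U_L$ is a bounded nonnegative random variable whose probability of being large is known to decay geometrically.

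Concretely, fix the constants $K_2 > 0$ and $0 < \alpha_2 < 1$ supplied by Theorem \ref{the:NESyncRate} (Claim 2), so that $\Prob(U_L > \alpha_2^L) \le K_2 \alpha_2^L$ for every $L \in \N$. Then I would split the expectation over the event $\{U_L \le \alpha_2^L\}$ and its complement:
\begin{align*}
\AvgUncertainty(L) = \Ex[U_L]
  & \le \alpha_2^L \cdot \Prob(U_L \le \alpha_2^L) + \log_2(N)\cdot \Prob(U_L > \alpha_2^L) \\
  & \le \alpha_2^L + K_2 \log_2(N)\,\alpha_2^L .
\end{align*}
Setting $\alpha \equiv \alpha_2$ and $K \equiv 1 + K_2 \log_2(N)$ then gives $\AvgUncertainty(L) \le K \alpha^L$ for all $L \in \N$, as claimed.

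If one wishes to avoid invoking the full theorem here (its proof of Claim 2 already quotes precisely this expectation-split computation, applied to a power machine), the alternative is to run the same argument natively on the power machine $M^n$ produced by Lemma \ref{lem:ExpectationBound}: Lemma \ref{lem:ConstantsAssumingExpectation} furnishes a geometric tail bound on that machine's belief-entropy variable, the split above then bounds $\V(L) = \AvgUncertainty(nL)$ geometrically in $L$, and finally one passes from the subsequence $\{nL\}$ to all lengths $m$ using that $\AvgUncertainty(\cdot)$ is monotonically decreasing and bounded by $\log_2 N$.

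I do not expect a genuine obstacle in either route. In the direct route the bound holds for all $L$ with no patching needed; the only mild care is in the power-machine route, where one must rescale the decay base ($\eta \mapsto \eta^{1/n}$) when converting the subsequence estimate to all $m$, and enlarge $K$ to absorb the finitely many lengths $m < n$ and the $\log_2 N$ ceiling.
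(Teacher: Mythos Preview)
Your proposal is correct and is essentially identical to the paper's proof: the paper also invokes Claim 2 of Theorem \ref{the:NESyncRate}, splits the expectation $\AvgUncertainty(L)=\sum_w \Prob(w)u(w)$ over $\{u(w)\le \alpha^L\}$ and its complement, bounds the latter by $\log_2 N$, and arrives at exactly $K=1+C\log_2(N)$. Your observation about the apparent circularity is also accurate---the paper's proof of Claim 2 explicitly says ``a proof identical to that of Prop.~\ref{prop:StateUncertaintyConverge} below'' when applying this same split to the power machine---and your sketched alternative route through $M^n$, Lemma \ref{lem:ConstantsAssumingExpectation}, the rescaling $\eta\mapsto\eta^{1/n}$, and monotonicity of $\AvgUncertainty(\cdot)$ is precisely how the paper untangles the logical dependence.
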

\vspace{-0.3 in}
\begin{proof}
Let $M$ be any nonexact \eM. Then by Thm. \ref{the:NESyncRate} there exist
constants $C > 0$ and $0 < \alpha < 1$ such that
$\Prob(U_L > \alpha^L) \leq C \alpha^L$, for all $L \in \N$. Define:
\begin{align*}
A_L  		& = \{ w \in \LLM: u(w) \leq \alpha^L \} \mbox{ and }\\
A_L^c 	& = \LLM/A_L
\end{align*}
\vspace{-0.1in}
Then,
\begin{align*}
\AvgUncertainty(L) & = \sumw \Prob(w) u(w) \\
  & = \sum_{w \in A_L} \Prob(w) u(w)
  	+ \sum_{w \in A_L^c} \Prob(w) u(w) \\
  & \leq \Prob(A_L) \cdot \alpha^L + \Prob(A_L^c) \cdot \log_2(N) \\
  & \leq 1 \cdot \alpha^L + C \alpha^L \cdot \log_2(N) \\
  & = K \alpha^L ~,
\end{align*}
where $K \equiv 1 + C \log_2(N)$.
\end{proof}

\vspace{-0.2in}
\subsection{Exponential Convergence of  $\hmu(L)$}
\vspace{-0.1in}

\begin{Prop}
For any nonexact \eM\ $M$, there exist constants $K > 0$ and $0 < \alpha < 1$
such that:
\begin{align*}
\hmu(L) - \hmu \leq K \alpha^L ~, \mbox{ for all } L \in \N ~.
\end{align*}
\end{Prop}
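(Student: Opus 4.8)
The plan is to bound the entropy-rate gap $\hmu(L) - \hmu$ directly by the average state uncertainty $\AvgUncertainty(L-1)$, and then quote the exponential decay already proved in \propref{prop:StateUncertaintyConverge}. So the work reduces to the single inequality $\hmu(L+1) - \hmu \le \AvgUncertainty(L)$, after which the conclusion is immediate: if $\AvgUncertainty(L) \le K_0 \alpha^L$ for all $L$, then $\hmu(L) - \hmu \le K_0 \alpha^{L-1} = (K_0/\alpha)\alpha^L$, and we take $K = K_0/\alpha$ (the $L=0$ case being trivial or absorbable into $K$).

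To get that inequality I would rewrite both sides as conditional entropies. By definition of the length-$L$ approximation, $\hmu(L+1) = H[\MS_L \mid \Future^L]$; by the entropy-rate formula (Sec. \ref{sec:EntropyRate}) together with stationarity, $\hmu = H[\MS_0 \mid \CS_0] = H[\MS_L \mid \CS_L]$. The one structural input needed is that $\MS_L$ is conditionally independent of the past block $\Future^L$ given $\CS_L$ — true because the state sequence $(\CS_L)_{L\ge0}$ is Markov and $\MS_L$ is a function of the transition taken out of $\CS_L$, so the future of the process is independent of its past given the current state. Hence $H[\MS_L \mid \CS_L] = H[\MS_L \mid \CS_L, \Future^L]$, and therefore
\begin{align*}
\hmu(L+1) - \hmu
  & = H[\MS_L \mid \Future^L] - H[\MS_L \mid \CS_L, \Future^L] \\
  & \le H[\CS_L \mid \Future^L] \\
  & = \AvgUncertainty(L) ~,
\end{align*}
where the inequality is the elementary bound that ``knowing $\CS_L$ cannot reduce the uncertainty in $\MS_L$ by more than the uncertainty in $\CS_L$ itself'' (i.e. $I[\MS_L;\CS_L\mid\Future^L] \le H[\CS_L\mid\Future^L]$). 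Equivalently, one can argue word-by-word: for each generated $w$ of length $L$, $\th_w = \sum_k \phi(w)_k H[p_k] = H[\MS_L \mid \CS_L, \Future^L = w]$, so $h_w - \th_w \le u(w)$, and averaging against $\Prob(w)$ using \eqnref{Eq:hmuLp1_sum} and \eqnref{Eq:hmu_sum} gives the same bound.

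The only genuine obstacle is the conditional-independence step $\MS_L \perp \Future^L \mid \CS_L$ — once that is in hand, the rest is bookkeeping with conditional entropies and a citation of \propref{prop:StateUncertaintyConverge}. (Note that this is also exactly the same inequality implicitly used earlier, in \eqnref{Eq:h_vs_th} and \eqnref{Eq:gap}, where $h_w - \th_w \ge 0$; here we additionally need the matching upper bound $h_w - \th_w \le u(w)$, which is what makes the entropy-rate convergence inherit the exponential rate of synchronization.)
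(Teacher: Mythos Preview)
Your proposal is correct and essentially identical to the paper's proof: the paper also reduces the claim to the inequality $\hmu(L+1)-\hmu\le\AvgUncertainty(L)$ (stated as a separate lemma) and then invokes \propref{prop:StateUncertaintyConverge}. The only cosmetic difference is that the paper obtains that inequality by expanding $H[\Future^L,\MS_L,\CS_L]$ two ways via the chain rule (using $H[\MS_L\mid\CS_L,\Future^L]=H[\MS_L\mid\CS_L]$ at the same conditional-independence step you isolate), whereas you phrase the same computation as the mutual-information bound $I[\MS_L;\CS_L\mid\Future^L]\le H[\CS_L\mid\Future^L]$; these are equivalent rearrangements of the same identity.
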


\vspace{-0.2in}
\begin{proof}
This follows directly from Prop. \ref{prop:StateUncertaintyConverge} and
Lemma \ref{lem:AvgUnceraintyL_boundon_hmuL} below.
\end{proof}

\vspace{-0.2in}
\begin{Lem}
For any \eM\ $M$ and any $L \in \N$:
\begin{align}
\hmu(L+1) - \hmu \leq \AvgUncertainty(L) ~.
\end{align}
\label{lem:AvgUnceraintyL_boundon_hmuL}
\end{Lem}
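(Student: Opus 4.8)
The plan is to prove the slightly cleaner equivalent statement $\AvgUncertainty(L) + \hmu \geq \hmu(L+1)$ by expanding the joint conditional entropy $H[\CS_L,\MS_L | \Future^L]$ in two ways with the chain rule. Note first that, since $\Future^L = \MS_0 \cdots \MS_{L-1}$, the symbol $\MS_L$ is exactly the ``next'' symbol whose average uncertainty is $\hmu(L+1)$. Expanding $\CS_L$ first gives $H[\CS_L,\MS_L|\Future^L] = H[\CS_L|\Future^L] + H[\MS_L|\CS_L,\Future^L] = \AvgUncertainty(L) + H[\MS_L|\CS_L,\Future^L]$. Expanding $\MS_L$ first gives $H[\CS_L,\MS_L|\Future^L] = H[\MS_L|\Future^L] + H[\CS_L|\MS_L,\Future^L] = \hmu(L+1) + H[\CS_L|\MS_L,\Future^L] \geq \hmu(L+1)$, the last step being nonnegativity of conditional entropy.

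The one substantive point is to identify $H[\MS_L|\CS_L,\Future^L]$ with $\hmu$. This is where the hidden-Markov structure enters: conditioned on the current state $\CS_L$, the ``forward'' block $(\MS_L,\CS_{L+1},\MS_{L+1},\dots)$ is independent of everything preceding it, in particular of $\Future^L = \MS_0 \cdots \MS_{L-1}$; hence $H[\MS_L|\CS_L,\Future^L] = H[\MS_L|\CS_L]$, and by stationarity this equals $H[\MS_0|\CS_0]$, which is precisely $\hmu$ by the entropy-rate formula of Sec.~\ref{sec:EntropyRate}. Combining the two expansions then yields $\AvgUncertainty(L) + \hmu \geq \hmu(L+1)$, which is the claimed inequality.

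Equivalently, one could run the argument word by word: for each $w \in \LLM$ the same pair of chain-rule expansions of $H[\CS_L,\MS_L|\Future^L = w]$ gives $u(w) + \th_w = h_w + H[\CS_L|\MS_L,\Future^L=w] \geq h_w$, so $h_w - \th_w \leq u(w)$; then averaging against $\Prob(w)$ and using $\hmu(L+1) = \sumw \Prob(w) h_w$ and $\hmu = \sumw \Prob(w) \th_w$ from Eqs.~(\ref{Eq:hmuLp1_sum}) and~(\ref{Eq:hmu_sum}) recovers the bound. I do not expect any real obstacle here; the only thing requiring a little care is the conditional-independence step $H[\MS_L|\CS_L,\Future^L] = H[\MS_L|\CS_L]$, which is immediate from the Markov property of the state chain together with the fact that the emitted symbol's law depends only on the current state.
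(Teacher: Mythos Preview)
Your argument is correct and is essentially the paper's own proof: the paper expands $H[\Future^L,\MS_L,\CS_L]$ two ways via the chain rule, which after cancelling the common $H[\Future^L]$ term is exactly your two expansions of $H[\CS_L,\MS_L|\Future^L]$, and then uses the same Markov/stationarity step $H[\MS_L|\CS_L,\Future^L]=H[\MS_L|\CS_L]=\hmu$ together with nonnegativity of $H[\CS_L|\Future^L,\MS_L]$. Your word-by-word variant is also fine and matches the paper's Eqs.~(\ref{Eq:hmuLp1_sum}) and~(\ref{Eq:hmu_sum}).
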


\vspace{-0.2in}
\begin{proof}
Note that:
\begin{align}
\label{eq:Expansion1}H[\Future^L,\MS_L,\CS_L]
  & = H[\Future^L] + H[\CS_L|\Future^L] + H[\MS_L|\Future^L,\CS_L] \nonumber \\
  & =H[\Future^L] + H[\CS_L|\Future^L]  + H[\MS_L|\CS_L] \nonumber \\
  & = H[\Future^L] + H[\CS_L|\Future^L] + \hmu ~,
\end{align}
and also that:
\begin{align}
\label{eq:Expansion2}
H[\Future^L,\MS_L,\CS_L] & = H[\Future^L] + H[\MS_L|\Future^L] \nonumber \\
  & ~~~~~~~ + H[\CS_L|\Future ^L,\MS_L] ~.
\end{align}
Equating the RHS of Eqs. (\ref{eq:Expansion1}) and (\ref{eq:Expansion2}) gives:
\begin{align}
H[\CS_L|\Future^L] + \hmu
  & = H[\MS_L|\Future^L] + H[\CS_L|\Future^L,\MS_L] \nonumber \\
  & \geq H[\MS_L|\Future^L] ~.
\end{align}
\vspace{-0.1in}
Or, in other words:
\begin{align}
\AvgUncertainty(L) + \hmu \geq \hmu(L+1) ~.
\end{align}
\end{proof}

\vspace{-0.2in}
\begin{Rem}
If we define the \emph{synchronization} and \emph{predication decay constants},
respectively, as:
\begin{align*}
\alpha_s & = \limsup_{L \to \infty} ~\AvgUncertainty(L)^{1/L} \\
\alpha_p & = \limsup_{L \to \infty} \left(\hmu(L) - \hmu \right)^{1/L} ~,
\end{align*}
then Lemma \ref{lem:AvgUnceraintyL_boundon_hmuL} also implies
that $\alpha_p \leq \alpha_s$. This is to say, the observer's predictions
approach their optimal level at least as fast as the observer synchronizes.
Since Lemma \ref{lem:AvgUnceraintyL_boundon_hmuL} applies to any \eM, this
statement also holds for any \eM\ (exact or nonexact).
\end{Rem}

\subsection{Pointwise a.e. Asymptotic Synchronization}  
\label{sec:PointwiseSync}

\begin{Prop}
For any nonexact \eM\ $M$ there exists some $0 < \alpha < 1$ such that for
a.e. $\future \in \LiM$, there exists $L_0 \in \N$ such that for all
$L \geq L_0$,
\begin{align*}
u(\future^L) \leq \alpha^L ~.
\end{align*}
\end{Prop}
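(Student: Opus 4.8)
The plan is to deduce this pointwise a.e.\ statement from Claim 2 of the Nonexact Machine Synchronization Theorem by a standard Borel--Cantelli argument. First I would invoke Thm.~\ref{the:NESyncRate} to obtain constants $K_2 > 0$ and $0 < \alpha_2 < 1$ such that $\Prob(U_L > \alpha_2^L) \leq K_2 \alpha_2^L$ for all $L \in \N$, and then simply set $\alpha = \alpha_2$; no larger decay constant is needed.

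Next, working on the probability space of infinite sequences $\future \in \LiM$ with the stationary measure induced by $M$, define the ``bad'' events $E_L = \{ \future \in \LiM : u(\future^L) > \alpha^L \}$. Since $U_L$ is by definition the random variable $\future \mapsto u(\future^L)$, we have $\Prob(E_L) = \Prob(U_L > \alpha^L) \leq K_2 \alpha^L$, and summing the geometric series gives $\sum_{L=1}^{\infty} \Prob(E_L) \leq K_2 \sum_{L=1}^{\infty} \alpha^L < \infty$.

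Then by the first Borel--Cantelli lemma, $\Prob\bigl(\limsup_{L \to \infty} E_L\bigr) = 0$; that is, for a.e.\ $\future$ only finitely many $L$ satisfy $u(\future^L) > \alpha^L$. Hence for a.e.\ $\future$ there is some $L_0 = L_0(\future) \in \N$ with $u(\future^L) \leq \alpha^L$ for all $L \geq L_0$, which is exactly the claim.

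I do not anticipate a real obstacle here: the substantive content is entirely in Claim 2 of the theorem, and the remainder is elementary Borel--Cantelli bookkeeping. The only point deserving a word of care is the identification of the measurable event $\{\future : u(\future^L) > \alpha^L\}$ with the event $\{U_L > \alpha^L\}$ from the theorem statement, together with the (trivial) remark that the exponent $\alpha$ in the conclusion may be taken equal to $\alpha_2$, so that the finiteness of $\sum_L \Prob(E_L)$ is immediate.
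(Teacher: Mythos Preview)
Your proposal is correct and is essentially the same as the paper's proof, which simply says ``Apply the Borel--Cantelli Lemma to Thm.~\ref{the:NESyncRate}.'' You have just written out the details of that one-line argument.
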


\begin{proof}
Apply the Borel-Cantelli Lemma to Thm. \ref{the:NESyncRate}. 
\end{proof}

\section{Conclusion} 
\label{sec:Concl}

We analyzed the process of asymptotic synchronization to nonexact \eMs.
Although the treatment is more involved mathematically, the primary results are
essentially the same as those for the exact case given in Ref. \cite{Trav10a}.
An observer's average state uncertainty $\AvgUncertainty(L)$ vanishes
exponentially fast and, consequently, an observer's average uncertainty
in predictions $\hmu(L)$ converges to the machine's entropy rate $\hmu$
exponentially fast, as well. 

We hope to extend the asymptotic synchronization results to more general model
classes such as countable-state \eMs\ or nonunifilar HMMs. We also intend to
improve the bounds on the constant $\alpha$ given in the convergence theorems.

\section*{Acknowledgments}

NT was partially supported by an NSF VIGRE fellowship. This work was partially
supported by the Defense Advanced Research Projects Agency (DARPA) Physical
Intelligence project via subcontract No. 9060-000709. The views, opinions, and
findings contained here are those of the authors and should not be interpreted
as representing the official views or policies, either expressed or implied,
of the DARPA or the Department of Defense.

\bibliography{ref,chaos,other}

\end{document}